\newtheorem{theorem}{Theorem}
\newtheorem{definition}[theorem]{Definition}
\newtheorem{lemma}[theorem]{Lemma}
\newtheorem{remark}[theorem]{Remark}
\newtheorem{proposition}[theorem]{Proposition}
\newtheorem{claim}[theorem]{Claim}
\newcommand{\ie}{{\em i.e.,}\xspace}
\newcommand{\eg}{{\em e.g.,}\xspace}
\newcommand{\vs}{{\em vs.}\xspace}
\newcommand{\pa}{P}
\newcommand{\di}{Seed}
\newcommand{\dip}{x}
\newcommand{\rd}{Repeat\-Seed}
\newcommand{\fl}{Berry}
\newcommand{\flp}{y}
\newcommand{\bo}{Cloud\-berry}
\newcommand{\bop}{z}
\newcommand{\bops}{h}
\newcommand{\p}{Harvest}
\newcommand{\ite}{Assum\-ption}
\newcommand{\pp}{Push\-Pattern}
\newcommand{\aud}{RV}
\newcommand{\dist}{D}
\newcommand{\sholab}{l}
\newcommand{\difbit}{l'}
\newif\ifSODA
\newif\ifTR
\begin{document}

	\title{{\bf Asynchronous approach in the plane: A deterministic polynomial algorithm}\footnote{A preliminary version of this paper appeared in the Proceedings of the 31st International Symposium on Distributed Computing (DISC 2017). This work was performed within Project ESTATE (Ref. ANR-16-CE25-0009-03) and Project TOREDY. The first project is supported by French state funds managed by the ANR (Agence Nationale de la Recherche), while the second project is supported by the European Regional Development Fund (ERDF) and the Hauts-de-France region.}}

	\author{S\'ebastien Bouchard$^{1}$, %
	Marjorie Bournat$^{1}$, %
	Yoann Dieudonn\'e$^{2}$, %
	Swan Dubois$^{1}$, %
	Franck Petit$^1$\\
	\small$^1$ Sorbonne Universit\'e, CNRS, INRIA, LIP6, F-75005 Paris, France\\
	\small$^2$  MIS Lab., Universit\'e de Picardie Jules Verne, France
	}
	%
	\date{\vspace{-5ex}}

	\maketitle

\begin{abstract}
		In this paper we study the task of \emph{approach} of two mobile agents having the same limited range
		of vision and moving asynchronously in the plane. This task consists in getting them in finite time
		within each other's range of vision. The agents execute the same deterministic algorithm and are
		assumed to have a compass showing the cardinal directions as well as a unit measure. On the other
		hand, they do not share any global coordinates system (like GPS), cannot communicate and have distinct
		labels. Each agent knows its label but does not know the label of the other agent or the initial
		position of the other agent relative to its own. The route of an agent is a sequence of segments that
		are subsequently traversed in order to achieve approach. For each agent, the computation of its route
		depends only on its algorithm and its label. An adversary chooses the initial positions of both agents
		in the plane and controls the way each of them moves along every segment of the routes, in particular
		by arbitrarily varying the speeds of the agents. Roughly speaking, the goal of the adversary is to
		prevent the agents from solving the task, or at least to ensure that the agents have covered as much
		distance as possible before seeing each other. A deterministic approach algorithm is a deterministic
		algorithm that always allows two agents with any distinct labels to solve the task of approach
		regardless of the choices and the behavior of the adversary. The cost of a complete execution of an
		approach algorithm is the length of both parts of route travelled by the agents until approach is completed.

		Let $\Delta$ and $l$ be the initial distance separating the agents and the length of (the binary representation of) the shortest label,
		respectively.  {\it Assuming that $\Delta$ and $l$ are unknown to both agents, does there exist a deterministic 
		approach algorithm always working at a cost that is polynomial in $\Delta$ and $l$?}


		Actually the problem of approach in the plane reduces to the network problem of rendezvous in an infinite oriented grid, which consists in ensuring that both agents end up meeting at the same time at a node or on an edge of the grid. By designing such a rendezvous algorithm with appropriate properties, as we do in this paper, we provide a positive answer to the above question.

		Our result turns out to be an important step forward from a computational point of view, as the other algorithms allowing to solve the same problem either have an exponential cost in the initial separating distance and in the labels of the agents, or require each agent to know its starting position in a global system of coordinates, or only work under a much less powerful adversary.

		\vspace{2ex}

		\noindent {\bf Keywords:} mobile agents, asynchronous rendezvous, plane, infinite grid, deterministic algorithm, polynomial cost.
\end{abstract}

\section{Introduction}
\label{sec:intro}

\ifTR
  \subsection{Model and Problem}
  \label{sub:subm}
\else
  \ifSODA
    \paragraph{Model and Problem.}
  \fi
\fi

The distributed system considered in this paper consists of two {\em mobile agents} that are initially placed by an adversary at arbitrary but distinct positions in the plane. Both agents have a {\em limited sensory radius} (in the sequel also referred to as {\em radius of vision}), the value of which is denoted by $\epsilon$, allowing them to sense (or, to see) all their surroundings at distance at most $\epsilon$ from their respective current locations. We assume that the agents know the value of $\epsilon$. As stated in \cite{CzyzowiczPL12}, when $\epsilon=0$, if agents start from arbitrary positions of the plane and can freely move on it, making them occupy the same location at the same time is impossible in a deterministic way. So, we assume that $\epsilon>0$ and we consider the task of {\em approach} which consists in bringing them at distance at most $\epsilon$ so that they can see each other. In other words, the agents completed their approach once they mutually sense each other and they can even get closer.  Without loss of generality, we assume in the rest of this paper that $\epsilon=1$.

The initial positions of the agents, arbitrarily chosen by the adversary, are separated by a distance $\Delta$ that is initially unknown to both agents and that is greater than $\epsilon=1$.
In addition to the initial positions, the adversary also assigns a different non-negative integer (called label) to each agent. The label of an agent is the only input of the deterministic algorithm executed by the agent. While the labels are distinct, the algorithm is the same for both agents. Each agent is equipped with a compass showing the cardinal directions and with a unit of length. The cardinal directions and the unit of length are the same for both agents. 

To describe how and where each agent moves, we need to introduce two important notions that are borrowed from \cite{CzyzowiczPL12}: The {\em route} and the {\em walk} of an agent. The {\em route} of an agent is a sequence $(S_1,S_2,S_3\ldots)$ of segments $S_i=[a_i,a_{i+1}]$ traversed in stages as follows. The route starts from $a_1$, the initial position of the agent. For every $i\geq 1$, starting from the position $a_i$, the agent initiates Stage~$i$ by choosing a direction $\alpha$ (using its compass) as well as a distance $x$. Stage~$i$ ends as soon as the agent either sees the other agent or reaches $a_{i+1}$ corresponding to the point at distance $x$ from $a_i$ in direction $\alpha$. Stages are repeated indefinitely (until the approach is completed).
Since both agents never know their positions in a global coordinate system, the directions they choose at each stage can only depend on their (deterministic) algorithm and their labels. So, the route (the actual sequence of segments) followed by an agent depends on its algorithm and its label, but also on its initial position.
By contrast, the {\em walk} of each agent along every segment of its route is controlled by the adversary. More precisely, within each stage $S_i$ and while the approach is not achieved, the adversary can arbitrarily vary the speed of the agent, stop it and even move it back and forth as long as the walk of the agent is continuous, does not leave $S_i$, and ends at $a_{i+1}$. Roughly speaking, the goal of the adversary is to prevent the agents from solving the task, or at least to ensure that the agents have covered as much distance as possible before seeing each other. We assume that at any time an agent can remember the route it has followed since the beginning.

A {\em deterministic approach algorithm} is a deterministic algorithm that always allows two agents to solve the task of approach regardless of the choices and the behavior of the adversary. The {\em cost} of an accomplished approach is the length of both parts of route travelled by the agents until they see each other. An approach algorithm is said to be {\em polynomial} in $\Delta$ and in the length of the binary representation of the shortest label between both agents if it always permits to solve the problem of approach at a cost that is polynomial in the two aforementioned parameters, no matter what the adversary does.

It is worth mentioning that the use of distinct labels is not fortuitous. In the absence of a way of distinguishing the agents, the task of approach would have no deterministic solution. This is especially the case if the adversary handles the agents in a perfect synchronous manner. Indeed, if the agents act synchronously and have the same label, they will always follow the same deterministic rules leading to a situation in which the agents will always be exactly at distance $\Delta$ from each other.

\ifTR
  \subsection{Our Results}
  \label{sub:res}
\else
  \ifSODA
    \paragraph{Our Results.}
  \fi
\fi

In this paper, we prove that the task of approach can be solved deterministically in the above asynchronous model, at a cost that is polynomial in the unknown initial distance separating the agents and in the length of the binary representation of the shortest label. To obtain this result, we go through the design of a deterministic algorithm for a very close problem, that of rendezvous in an infinite oriented grid which consists in ensuring that both agents end up meeting either at a node or on an edge of the grid. The tasks of approach and rendezvous are very close as the former can be reduced to the latter.

It should be noticed that our result turns out to be an important advance, from a computational point of view, in resolving the task of approach. Indeed, the other existing algorithms allowing to solve the same problem either have an exponential cost in the initial separating distance and in the labels of the agents \cite{CzyzowiczPL12}, or require each agent to know its starting position in a global system of coordinates \cite{CollinsCGL10}, or only work under a much less powerful adversary \cite{DieudonneP15} which initially assigns a possibly different speed to each agent but cannot vary it afterwards. 

\ifTR
  \subsection{Related Work}
  \label{sub:rw}
\else
  \ifSODA
   \vspace*{-6pt}

   \paragraph{Related Work.} 
  \fi
\fi

The task of approach is closely linked to the task of rendezvous. Historically, the first mention of the rendezvous problem appeared in \cite{Schelling}. 
From this publication until now, the problem has been extensively studied so that there is henceforth a huge literature about this subject. This is mainly due to the fact that {there are a lot of alternatives} for the combinations we can make when addressing the problem, \eg playing on the environment in which the agents are supposed to evolve, the way of applying the sequences of instructions (\ie deterministic or randomized) or the ability to leave some traces in the visited locations, etc. Naturally, in this paper {we focus on the work} related to deterministic rendezvous. This is why we will mostly dwell on this scenario in the rest of this 
\ifTR
   subsection.
\else
  \ifSODA
     paragraph.
  \fi
\fi
However, for the curious reader wishing to consider the matter in greater depth, regarding randomized rendezvous, a good starting point is to go through \cite{Alpern02,Alpern03,KranakisKR06}. Concerning deterministic rendezvous, the literature is divided according to the way of modeling the {environment: agents} can either move in a graph representing a network, or in the plane.

For the problem of rendezvous in networks, a lot of papers considered synchronous settings, \ie a context where the agents move in the graph in synchronous rounds. This is particularly the case of \cite{DessmarkFKP06} in which the authors presented a deterministic protocol for solving the rendezvous problem, which guarantees a meeting of the two involved agents after a number of rounds that is polynomial in the size $n$ of the graph, the length $l$ of the shortest of the two labels and the time interval $\tau$ between their wake-up times. As an open problem, the authors asked whether it was possible to obtain a polynomial solution to this problem which would be independent of $\tau$. A positive answer to this question was given, independently of each other, in~\cite{KowalskiM08} and~\cite{Ta-ShmaZ14}. While these algorithms ensure rendezvous in polynomial time (\ie a polynomial number of rounds), they also ensure it at polynomial cost because the cost of a rendezvous protocol in a graph is the number of edges traversed by the agents until they meet---each agent can make at most one edge traversal per round. 
Note that despite the fact a polynomial time implies a polynomial cost in this context, the reciprocal is not always true as the agents can have very long waiting periods, sometimes interrupted by a movement. Thus these parameters of cost and time are not always linked to each other. This was highlighted in \cite{MillerP16} where the authors studied the tradeoffs between cost and time for the deterministic rendezvous problem.  More recently, some efforts have been dedicated to analyse the impact on time complexity of rendezvous when in every round the agents are brought with some pieces of information by making a query to some device or some oracle~\cite{DKU14,MillerP14b}. Along with the work aiming at optimizing the parameters of time and/or cost of rendezvous, some other work have examined the amount of required memory to solve the problem, \eg \cite{FraigniaudP08,FraigniaudP13} for tree networks and in \cite{CzyzowiczKP12} for general networks. In~\cite{ChalopinDLP16}, the problem is approached in a fault-prone framework, in which the adversary can delay an agent for a finite number of rounds, each time it wants to traverse an edge of the network.

Rendezvous is the term that is usually used when the task of meeting is restricted to a team of exactly two agents. When considering a team of two agents or more, the term of gathering is commonly used. Still in the context of synchronous networks, we can cite some work about gathering two or more agents. In~\cite{DieudonneP16}, the task of gathering is studied for anonymous agents while in~\cite{BouchardDD16,0001LM15,DieudonnePP14} the same task is studied in presence of byzantine agents that are, roughly speaking, malicious agents with an arbitrary behavior.

Some studies have been also dedicated to the scenario in which the agents move asynchronously in a network \cite{CzyzowiczPL12,DieudonnePV15,MarcoGKKPV06}, \ie assuming that the agent speed may vary, controlled by the adversary. In \cite{MarcoGKKPV06}, the authors investigated the cost of rendezvous for both infinite and finite graphs.  In the former case, the graph is reduced to the (infinite) line and bounds are given depending on whether the agents know the initial distance between them or not. In the latter case (finite graphs), similar bounds are given for ring shaped networks.  They also proposed a rendezvous algorithm for an arbitrary graph provided the agents initially know an upper bound on the size of the graph. This assumption was subsequently removed in \cite{CzyzowiczPL12}.  However, in both~\cite{MarcoGKKPV06} and~\cite{CzyzowiczPL12}, the cost of rendezvous was exponential in the size of the graph. The first rendezvous algorithm working for arbitrary finite connected graphs at cost polynomial in the size of the graph and in the length of the shortest label was presented in \cite{DieudonnePV15}. (It should be stressed that the algorithm from \cite{DieudonnePV15} cannot be used to obtain the solution described in the present paper: this point is fully explained in the end of this subsection). In all the aforementioned studies, the agents can remember all the actions they have made since the beginning. A different asynchronous scenario for networks was studied in \cite{DAngeloSN14}. In this paper, the authors assumed that agents are oblivious, but they can observe the whole graph and make navigation decisions based on these observations.

Concerning rendezvous or gathering in the plane, we also found the same dichotomy of synchronicity \vs asynchronicity. The synchronous case was introduced in~\cite{SuzukiY99} and studied from a fault-tolerance point of view in~\cite{AgmonP06,DefagoGMP06,DieudonneP12}. In~\cite{IzumiSKIDWY12}, rendezvous in the plane is studied for oblivious agents equipped with unreliable compasses under synchronous and asynchronous models. Asynchronous gathering of many agents in the plane has been studied in various settings in \cite{CieliebakFPS12,CohenP05,CohenP08,FlocchiniPSW05,PagliPV15}. However, the common feature of all these papers related to rendezvous or gathering in the plane -- which is not present in our model -- is that the agents can observe all the positions of the other agents or at least the global graph of visibility is always connected (\ie the team cannot be split into two groups so that no agent of the first group can detect at least one agent of the second group).

Finally, the closest works to ours allowing to solve the problem of approach under an asynchronous framework are \cite{CollinsCGL10,BampasCGIL10,CzyzowiczPL12,DieudonneP15}. In \cite{CollinsCGL10,CzyzowiczPL12,DieudonneP15}, the task of approach is solved by reducing it to the task of rendezvous in an infinite oriented grid. In \cite{BampasCGIL10}, the authors present a solution to solve the task of approach in a multidimensional space by reducing it to the task of rendezvous in an infinite multidimensional grid. Let us give some more details concerning these four works to highlight the contrasts with our present contribution. The result from \cite{CzyzowiczPL12} leads to a solution to the problem of approach in the plane but has the disadvantage of having an exponential cost. 
The result from \cite{CollinsCGL10} and~\cite{BampasCGIL10} also implies a solution to the problem of approach in the plane at cost polynomial in the initial distance of the agents. However, in both these works, the authors 
use the powerful assumption that each agent knows its starting position in a global system of coordinates (while in our paper, the agents are completely ignorant of where they are).  
Lastly, the result from \cite{DieudonneP15} provides a solution at cost polynomial in the initial distance between agents and in the length of the shortest label. However, the authors of this study also used a powerful assumption: The adversary initially assigns a possibly different and arbitrary speed to each agent but cannot vary it afterwards. Hence, each agent moves at constant speed and uses clock to achieve approach. By contrast, in our paper, we assume basic asynchronous settings, \ie the adversary arbitrarily and permanently controls the speed of each agent.

To close this subsection, it is worth mentioning that it is unlikely that the algorithm from \cite{DieudonnePV15} that we referred to above, which is especially designed for asynchronous rendez-vous in arbitrary finite graphs, could be used to obtain our present result. First, in \cite{DieudonnePV15} the algorithm has not a cost polynomial in the initial distance separating the agents and in the length of the smaller label. Actually, ensuring rendezvous at this cost is even impossible in {an arbitrary graph}, as witnessed by the case of the clique with two agents labeled $0$ and $1$: the adversary can hold one agent at a node and make the other agent traverse $\Theta(n)$ edges before rendezvous, in spite of the initial distance $1$. Moreover, the validity of the algorithm given in \cite{DieudonnePV15} closely relies on the fact that both agents must evolve in the same finite graph, which is clearly not the case in our present scenario. In particular even when considering the task of rendezvous in an infinite oriented grid, the natural attempt consisting in making each agent apply the algorithm from \cite{DieudonnePV15} within bounded grids of increasing size and centered in its initial position, does not permit to claim that rendezvous ends up occurring. Indeed, the bounded grid considered by an agent is never exactly the same than the bounded grid considered by the other one (although they may partly overlap), and thus the agents never evolve in the same finite graph which is a necessary condition to ensure the validity of the solution of \cite{DieudonnePV15} and by extension of this natural attempt.

\ifTR
  \subsection{Roadmap}

\else
  \ifSODA
\vspace*{-6pt}

\paragraph{Roadmap.} 
  \fi
\fi

The next section (Section~\ref{sec:pre}) is dedicated to the computational
model and basic definitions. 
We sketch our solution in Section~\ref{sec:idea}, %
\ifTR
formally described in Sections~\ref{sec:bas} and~\ref{sec:principalAlgorithm}.  
Section~\ref{sec:proof} presents the correctness proof and cost analysis of the algorithm. 
\else
  \ifSODA
more formally described in Sections~\ref{sec:principalAlgorithm}---details on the algorithm, the proofs of correction, and cost analysis are given in appendix. 
  \fi
\fi
Finally, we make some concluding remarks in Section~\ref{sec:ccl}.

\section{Preliminaries} \label{sec:pre}

	We know from \cite{CzyzowiczPL12,DieudonneP15} that the problem of approach in the plane can be reduced to that of rendezvous in an infinite grid specified in the next paragraph.

	Consider an {\em infinite square grid} in which every node $u$ is adjacent to $4$ nodes located North, East,
	South, and West from node $u$.  We call such a grid a \emph{basic grid}. Two agents with distinct labels
	(corresponding to non-negative integers) starting from arbitrary and distinct nodes of a basic grid $G$ have
	to meet either at some node or inside some edge of $G$. As for the problem of approach (in the plane), each agent is equipped
	with a compass showing the cardinal directions. The agents can see each other and communicate only when they
	share the same location in $G$. In other words, in the basic grid $G$ we assume that the sensory radius (or,
	radius of vision) of the agents is equal to zero. 
	In such settings, the only initial input that is given to a rendezvous algorithm is the label of the executing agent.
	When occupying a node $u$, an agent decides (according to its algorithm) to move to an adjacent node $v$ via
	one of the four cardinal directions: the movement of the agent along the edge $\{u,v\}$ is controlled by the
	adversary in the same way as in a section of a route (refer to %
\ifTR Subsection~\ref{sub:subm}\else \ifSODA Section~{\ref{sec:intro}}\fi\fi%
\xspace), \ie the adversary can arbitrarily vary the speed of the agent, stop it and even move it back and forth as long as the walk of the agent is continuous, does not leave the edge, and ends at $v$.

The {\em cost} of a rendezvous algorithm in a basic grid is the total number of edge traversals by both agents until their meeting.


	From the reduction described in \cite{DieudonneP15}, we have the following theorem.

	\begin{theorem}
	\label{theo:preli}
		If there exists a deterministic algorithm solving the problem of rendezvous between any two agents in
		a basic grid at cost polynomial in $D$ and in the length of the binary representation of the shortest
		of their labels where $D$ is the distance (in the Manhattan metric) between the two starting nodes
		occupied by the agents, then there exists a deterministic algorithm solving the problem of approach in
		the plane between any two agents at cost polynomial in $\Delta$ and in the length of the binary
		representation of the shortest of their labels where $\Delta$ is the initial Euclidean distance separating the agents.
	\end{theorem}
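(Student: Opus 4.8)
The plan is to turn any grid rendezvous algorithm $\mathcal{R}$ satisfying the hypothesis into an approach algorithm $\mathcal{A}$ in the plane by a simple overlay simulation, and then to argue that a meeting in the simulated grid forces the two agents to lie within sensory radius $\epsilon=1$ in the plane. Concretely, each agent fixes once and for all the infinite unit grid whose axes are given by its compass and whose origin is its own starting point, and runs $\mathcal{R}$ on that grid with its own label as input: whenever $\mathcal{R}$ prescribes a move to the neighbour lying North (resp. East, South, West), the agent walks one unit in that cardinal direction in the plane, traversing one segment of its route. Since in a basic grid an agent receives no information about the other before they meet, the grid route produced by $\mathcal{R}$ is a deterministic function of the label alone; hence $\mathcal{A}$ is well defined, and each agent simply keeps executing this predetermined sequence of unit cardinal moves until it sees the other agent, at which point approach is achieved.

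The heart of the argument is to reconcile the two private grids, which the agents cannot align since neither knows the other's position. I would take one agent's grid, say $G$, as a common reference. The first agent starts at a node of $G$, while the second starts at its plane position $p_2$; letting $n_2$ be the node of $G$ closest to $p_2$ and $\delta = p_2 - n_2$, we have $\|\delta\|_2 \le \tfrac{\sqrt2}{2} < 1 = \epsilon$. Because both agents use the same unit and the same compass, every cardinal unit move is the same vector for both, so the second agent's successive plane positions are exactly the nodes of $G$ reached by its $\mathcal{R}$-route started from $n_2$, each shifted by the constant vector $\delta$ (and the same holds during a segment traversal, since $\delta$ does not depend on the position inside the current edge). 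Moreover, the plane adversary's behaviour on a segment maps, through this same translation, to a legal asynchronous walk on the corresponding edge of $G$, so the execution of $\mathcal{R}$ we are simulating is a genuine asynchronous execution to which its correctness guarantee applies. Invoking that guarantee, the two agents meet in $G$, either at a common node or at a common interior point $q$ of a common edge; at that instant their plane positions differ by exactly $\delta$, hence are at distance at most $\tfrac{\sqrt2}{2} < \epsilon$, so they see each other and approach is completed no later than this meeting.

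It remains to bound the cost. Writing $n_1 = p_1$ and $n_2$ for the two starting nodes in $G$, their Euclidean distance is at most $\Delta + \tfrac{\sqrt2}{2}$, so their Manhattan grid distance satisfies $D \le \sqrt2\,(\Delta + \tfrac{\sqrt2}{2}) = O(\Delta)$; by hypothesis $\mathcal{R}$ then meets within $\mathrm{poly}(D,l)$ edge traversals, where $l$ is the length of the shorter label. Since one grid edge traversal is realized by one route segment of length $1$, and the cost of an approach execution counts route length rather than the adversary's back-and-forth within a segment, the plane cost of $\mathcal{A}$ is at most the number of edge traversals of $\mathcal{R}$, namely $\mathrm{poly}(D,l) = \mathrm{poly}(\Delta,l)$. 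The step I expect to be most delicate is precisely the invariance of the offset $\delta$ under full asynchrony: one must verify carefully that letting the adversary vary speeds and move an agent back and forth inside each segment never alters the constant translation relating a plane trajectory to its grid trajectory, and that the grid notion of meeting inside an edge transfers verbatim to the plane once this translation is applied.
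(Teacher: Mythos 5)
Your proposal is correct and follows essentially the same route as the paper's own reduction (borrowed from \cite{DieudonneP15}): each agent simulates the grid algorithm on its private grid anchored at its starting point, and the constant offset $\delta$ (the paper's vector $\alpha$ from the closest node $v'$ to $w$, of length at most $\sqrt{2}/2<1$) translates any asynchronous plane execution into a legal asynchronous grid execution, so a grid meeting forces plane distance at most $\sqrt{2}/2<\epsilon$. Your explicit bound $D\le\sqrt{2}\,(\Delta+\sqrt{2}/2)$ and the remark that the offset is preserved under the adversary's in-segment behaviour merely make precise what the paper leaves implicit.
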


	For completeness let us now outline the reduction described in~\cite{DieudonneP15}.
	Consider an infinite square grid with edge length 1. More precisely, for any point $v$ in the plane, we define
	the {\em basic grid} $G_v$ to be the infinite graph, one of whose nodes is $v$, and in which every node $u$ is
	adjacent to 4 nodes at Euclidean distance 1 from it, and located North, East, South, and West from node $u$.
	We now focus on how to transform any rendezvous algorithm in the grid $G_v$ 
	to an algorithm for the task of approach in the plane. 
	
	Let $A$ be any rendezvous algorithm for any basic
	grid. Algorithm $A$ can be executed in the grid $G_w$, for any point $w$ in the plane. Consider two agents in
	the plane starting respectively from point $v$ and from another point $w$ in the plane. 
	Let $V'$ be the set of nodes in $G_v$ that are the closest nodes from $w$. Let $v'$ be a node in $V'$,
	arbitrarily chosen.
	Notice that $v'$ is at distance at most $\sqrt{2}/2<1$ from $w$. Let $\alpha$ be the vector $v'w$. Execute
	algorithm $A$ on the grid $G_v$ with agents starting at nodes $v$ and $v'$. Let $p$ be the point in $G_v$
	(either a node of it or a point inside an edge), in which these agents meet at some time $t$. The transformed
	algorithm $A^*$ for approach in the plane works as follows: Execute the same algorithm $A$ but with one agent
	starting at $v$ and traveling in $G_v$ and the other agent starting at $w$ and traveling in $G_w$, so that the
	starting time of the agent starting at $w$ is the same as the starting time of the agent starting at $v'$ in
	the execution of $A$ in $G_v$. The starting time of the agent starting at $v$ does not change. If approach has
	not been accomplished before, in time $t$ the agent starting at $v$ and traveling in $G_v$ will be at point
	$p$, as previously. In the same way, the agent starting at $w$ and traveling in $G_w$ will get to some point
	$q$ at time $t$. Clearly, $q=p+\alpha$. Hence both agents will be at distance less than 1 at time $t$, which
	means that they accomplish approach in the plane because $\epsilon=1$ (refer to %
\ifTR Subsection~\ref{sub:subm}\else \ifSODA Section~{\ref{sec:intro}}\fi\fi%
\xspace). 

	Hence in the rest of the paper we will consider rendezvous in a basic grid, instead of the task of approach.
	We use $N$ (resp. $E$, $S$, $W$) to denote the cardinal direction North (resp. East, South, West) and an
	instruction like ``Perform $NS$'' means that the agent traverses one edge to the North and then traverses one
	edge to the South (by the way, coming back to its initial position). We denote by $D$ the initial (Manhattan) distance separating two agents in a basic grid.
	A route followed by an agent in a basic grid corresponds to a path in the grid (\ie a sequence of edges
	$e_1,e_2,e_3,e_4,\ldots$) that are consecutively traversed by the agent until rendezvous is done. For any
	integer $k$, we define the {\em reverse path} to the path $e_1,\dots,e_k$ as the path $e_k,e_{k-1},\dots,
	e_1=\overline{e_1,\dots,e_{k-1},e_k}$. We denote by $C(p)$ the number of edge traversals performed by an agent
	during the execution of a procedure $p$.

	Consider two distinct nodes $u$ and $v$. We define a specific path from $u$ to $v$, denoted $\pa(u, v)$, as
	follows. If there exists a unique shortest path from $u$ to $v$, this {shortest path} is $\pa(u, v)$.
	Otherwise, consider the smallest rectangle $R_{(u,v)}$ such that $u$ and $v$ are two of its corners. $\pa(u,
	v)$ is the unique path among the shortest path from $u$ to $v$ that traverses all the edges on the northern
	side of $R_{(u,v)}$. Note that $\pa(u, v) = \overline{\pa(v, u)}$. %
\ifTR


	An illustration of $\pa(u, v)$ is given in Figure~\ref{path}.


	\begin{figure}[!h]
		\hspace*{0mm}\vfill
		\begin{center}
			\includegraphics[scale=0.3]{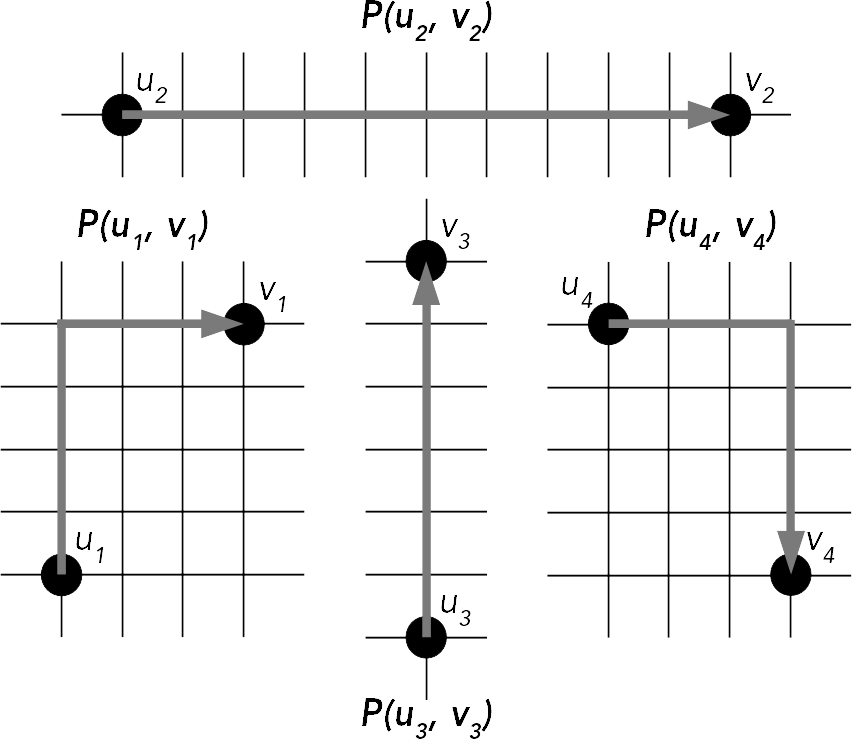}
		\end{center}
		\vfill\hspace*{0mm}
		\caption{Some different cases for $\pa(u, v)$}
		\label{path}
	\end{figure}
\fi

\section{Idea of the algorithm}
\label{sec:idea}

In this section we give the high level idea of our solution: more detailed explanations are given in section~\ref{sec:principalAlgorithm}.


\ifTR
  \subsection{Informal Description in a Nutshell}
  \label{int:nutshell} 
\else
  \ifSODA


    \paragraph{Informal Description in a Nutshell.}

  \fi
\fi

		We aim at achieving rendezvous of two asynchronous mobile agents in an infinite grid and in a deterministic way. It is well known that solving rendezvous deterministically is impossible in some symmetric graphs (like a basic grid) unless both agents are given distinct identifiers called labels. We use them to break the symmetry, \ie in our context, to make the agents follow different routes. The idea is to make each agent ``read'' its label binary representation, {one bit at a time} from the most to the least significant bits, and for each bit it reads, follow a route depending on the read bit. Our algorithm ensures rendezvous during some of the periods when they follow different routes \ie  when the two agents process two different bits.

		Furthermore, to design the routes that both agents will follow, our approach would require to know an upper bound on two parameters, namely the initial distance between the agents and the length (of the binary representation) of the shortest label. As we suppose that the agents have no knowledge of these parameters, they both perform successive ``assumptions'', in the sequel called \emph{phases}, in order to find out such an upper bound. Roughly speaking, each agent attempts to estimate such an upper bound by successive tests, and for each of these tests, acts as if the upper bound estimation was correct. Both agents first perform Phase $0$. When Phase~$i$ does not lead to rendezvous, they perform Phase~$i+1$, and so on. More precisely, within Phase~$i$, the route of each agent is built in such a way that it ensures rendezvous if $2 ^ i$ is a good upper bound on the parameters of the problem. Hence, in our approach two requirements are needed: both agents are assumed $(1)$ to process two different bits (\ie $0$ and $1$) almost concurrently and $(2)$ to perform Phase~$i=\alpha$ almost at the same time---where $\alpha$ is the smallest integer such that the two aforementioned parameters are upper bounded by $2^{\alpha}$.

		However, to meet these requirements, we have to face two major issues.  First, since the adversary can vary both agent speeds, the idea described above does not prevent the adversary from making the agents always process the same type of bit at the same time. {Moreover}, the route cost depends on the phase number, and thus, if an agent were performing some Phase~$i$ with $i$ exponential in the initial distance and in the length of the binary representation of the smallest label, then our algorithm would not be polynomial. To tackle these two issues, we use a mechanism that prevents the adversary from making an agent execute the algorithm arbitrarily faster than the other without meeting. Each of both these issues is {circumvented} via a specific ``synchronization mechanism''. Roughly speaking, the first one makes the agents read and process the bits of the binary representation of their labels at {nearly} the same speed, while the second ensures that they start Phase~$\alpha$ at almost the same time. This is particularly where our feat of strength is: orchestrating in a subtle manner these synchronizations in a fully asynchronous context while ensuring a polynomial cost. Now that we have described the very high level idea of our algorithm, let us give more details. 

\ifTR

\subsection{Under the hood}
\label{hood}
\else
  \ifSODA


    \paragraph{Under the Hood.}

  \fi
\fi

The approach described above allows us to solve rendezvous when there exists an index for which the binary representations of both labels differ. However, this is not always the case especially when a binary representation is a prefix of the other one (e.g., $100$ and $1000$). Hence, instead of considering its own label, each agent will consider a transformed label: The transformation borrowed from~\cite{DessmarkFKP06} will guarantee the existence of the desired difference over the new labels. In the rest of this description, we assume for convenience that the initial Manhattan distance $D$ separating the agents is at least the length of the shortest binary representation of the two transformed labels (the complementary case adds an unnecessary level of complexity to understand the intuition).

As mentioned previously, our solution ({cf. Algorithm~\ref{AsynchronousUnknownDistance} in Section~\ref{sec:principalAlgorithm}}) works in phases numbered $0,1,2,3,4,\ldots$ During Phase~$i$ ({cf. Procedure~$\ite$ called at line~\ref{calliteration} in Algorithm~\ref{AsynchronousUnknownDistance}}), the agent supposes that the initial distance $D$ is at most $2^{i}$ and processes one by one the first $2^{i}$ bits of its transformed label: In the case where $2^{i}$ is greater than the binary representation of its transformed label, the agent will consider that each of the last ``missing'' bits is $0$. When processing a bit, the agent executes a particular route which depends on the bit value and the phase number. The route related to bit $0$ ({relying in particular on Procedure~$\fl$ called at line~\ref{callFL} in Algorithm~\ref{iteration}}) and the route related to bit $1$ ({relying in particular on Procedure~$\bo$ called at line~\ref{callBO} in Algorithm~\ref{iteration}}) are obviously different and designed in such a way that if both these routes are executed almost simultaneously by two agents within a phase corresponding to a correct upper bound, then rendezvous occurs by the time any of them has been completed.


In the light of this, if we denote by $\alpha$ the smallest integer such that $2^{\alpha}\geq D$, it turns out that an ideal situation would be that the agents concurrently start phase $\alpha$ {and process the bits at quite the same rate} within this phase. Indeed, we would then obtain the occurrence of rendezvous by the time the agents complete the process of the {$\lambda$}th bit of their transformed label in phase $\alpha$, where {$\lambda$} is the smallest index for which the binary representations of their transformed labels differ. However, getting such an ideal situation in presence of a fully asynchronous adversary appears to be really challenging. This is where the two synchronization mechanisms briefly mentioned above come into the picture.

{If the agents start Phase~$\alpha$ approximately at the same time, the first synchronization mechanism ({cf. Procedure $\rd$ called at line~\ref{BitSynchronization} in Algorithm~\ref{iteration}}) permits to force the adversary to make the agents process their respective bits at similar speed within Phase~$\alpha$, as otherwise rendezvous would occur prematurely during this phase before the process by any agent of the {$\lambda$}th bit. This constraint is imposed on {the adversary by dividing each bit process into some predefined steps and by ensuring} that after each step $s$ of the $k$th bit process, for any $k\leq 2^{\alpha}$, an agent follows a specific route that forces the other agent to complete the step $s$ of its $k$th bit process.} {This route, on which the first synchronization is based, is constructed by relying on a simple principle that enables an agent to ``push'' the other. The principle is as follows: if an agent performs a given route $X$ included in a given area $\mathcal{S}$ of the basic grid, then the other agent can force it to finish route $X$ by covering $\mathcal{S}$ as many times as there are edge traversals in $X$. More precisely, each covering of $\mathcal{S}$  allows to traverse all the edges of $X$ at least once: so, in each covering the agent executing $X$ must complete at least one edge traversal or rendezvous occurs.} Hence, one of the major difficulties we have to face lies in the setting up of the second synchronization mechanism guaranteeing that the agents start Phase~$\alpha$ around the same time. At first glance, it might be tempting to use an analogous principle to the one used for dealing with the first synchronization. Indeed, if an agent $a_1$ follows a route covering $r$ times an area $\mathcal{Y}$ of the grid, such that $\mathcal{Y}$ is where the first $\alpha-1$ phases of an agent $a_2$ take place and $r$ is the maximal number of edge traversals an agent can make during these phases, then agent $a_1$ pushes agent $a_2$ to complete its first $\alpha-1$ phases and to start Phase~$\alpha$. Nevertheless, a strict application of this principle to the case of the second synchronization directly leads to an algorithm having a cost that is superpolynomial in $D$ and the length of the smallest label, due to a cumulative effect that does not appear for the case of the first synchronization. As a consequence, to force an agent to start its Phase~$\alpha$, the second synchronization mechanism does not depend on the kind of route described above, but on a much more complicated route that permits an agent to ``{\em push}'' the second one. This works by considering the ``pattern" that is drawn on the grid by the second agent rather than just the number of edges that are traversed ({cf. Procedure $\p$ called at line~\ref{MainFireworks'Call} in Algorithm~\ref{iteration}}). This is the most tricky part of our algorithm, one of the main idea of which relies in particular on the fact that some routes made of an arbitrarily large sequence of edge traversals can be pushed at a relative low cost by some other routes that are of comparatively small length, provided they are judiciously chosen. Let us illustrate this point through the following example. Consider an agent $a_1$ following from a node $v_1$ an arbitrarily large sequence of $X_i$, in which each $X_i$ corresponds either to $A\overline{A}$ or $B\overline{B}$ where $A$ and $B$ are any routes ($\overline{A}$ and $\overline{B}$ corresponding to their respective backtrack \ie the sequence of edge traversals followed in the reverse order). An agent $a_2$ starting from an initial node $v_2$ located at a distance at most $d$ from $v_1$ can force agent $a_1$ to finish its sequence of $X_i$ (or otherwise rendezvous occurs), regardless of the number of $X_i$, simply by executing $A\overline{A}B\overline{B}$ from each node at distance at most $d$ from $v_2$. To support this claim, let us suppose by contradiction that it does not hold. At some point, agent $a_2$ necessarily follows $A\overline{A}B\overline{B}$ from $v_1$. However, note that if either agent starts following $A\overline{A}$ (resp. $B\overline{B}$) from node $v_1$ while the other is following $A\overline{A}$ (resp. $B\overline{B}$) from node $v_1$, then the agents meet. Indeed, this implies that the more ahead agent eventually follows $\overline{A}$ (resp. $\overline{B}$) from a node $v_3$ to $v_1$ while the other is following $A$ (resp. $B$) from $v_1$ to $v_3$, which leads to rendezvous. Hence, when agent $a_2$ starts following $B\overline{B}$ from node $v_1$, agent $a_1$ is following $A\overline{A}$, and is not in $v_1$, so that it has at least started the first edge traversal of $A\overline{A}$. This means that when agent $a_2$ finishes following $A\overline{A}$ from $v_1$, $a_1$ is following $A\overline{A}$, which implies, using the same arguments as before, that they meet before either of them completes this route. Hence, in this example, agent $a_2$ can force $a_1$ to complete an arbitrarily large sequence of edge traversals with a single and simple route. {Actually, our second synchronization mechanism implements this idea (this point is refined in Section~\ref{sec:principalAlgorithm})}. 
{This was the most complicated thing to set up, as each part of route in every phase had to be orchestrated very carefully to permit, in the end, this low cost synchronization while still ensuring rendezvous. However, it is through this original and novel way of moving that we finally get the polynomial cost.}

\section{Basic patterns} \label{sec:bas}

	In this section we define some sequences of moving instructions, \ie patterns of moves, that will serve in turn as building blocks in the construction of our rendezvous algorithm. The main roles of these patterns are given in the next section when presenting our general solution.



	\subsection{Pattern $\di$}
	
	\begin{figure}[!h]
			\hspace*{0mm}\vfill
			\begin{center}
				\includegraphics[scale=0.26]{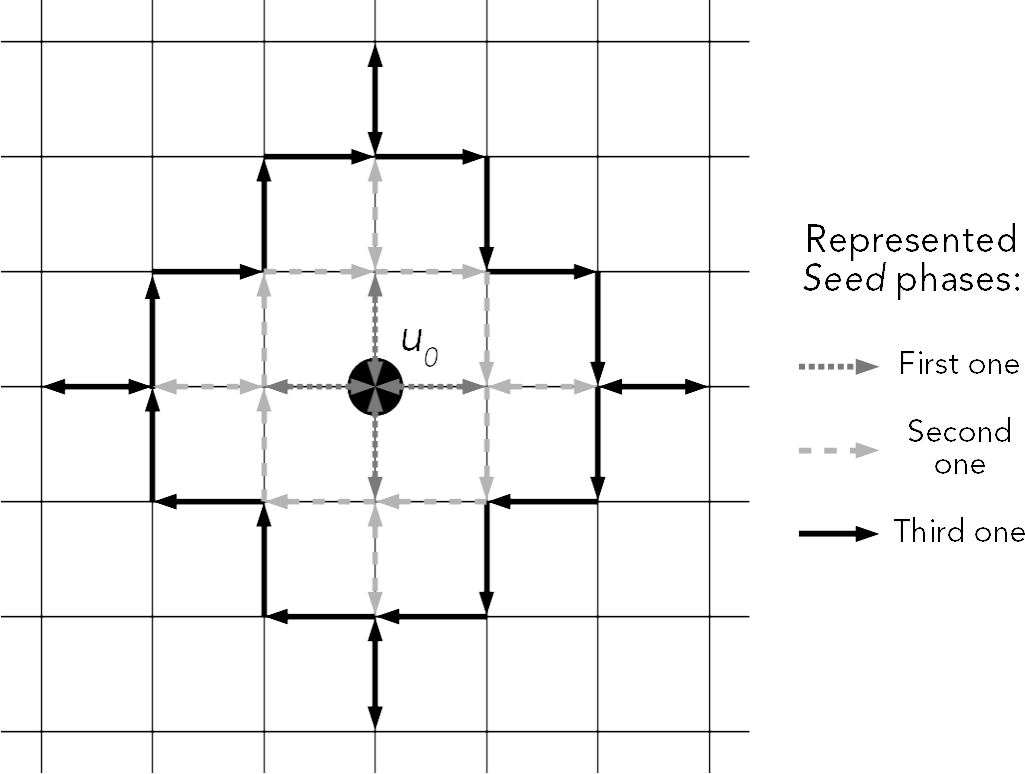}
			\end{center}
			\vfill\hspace*{0mm}
			\caption{{An illustration of the movements executed by an agent during the first period of $\di(3)$ from a node $u_0$. An arrow from a node $x$ to a node $y$ represents an edge traversal from $x$ to $y$. Depending on the shape of the arrow, the represented movement is performed in a different phase.}}
			\label{fig:snail}
		\end{figure}
		Pattern $\di$ is involved as a subpattern in the design of all the other patterns presented in this section.
		
		The description of Pattern~$\di$ is given in Algorithm~\ref{bas:di}. It is made of two periods. For a given non-negative integer $x$, the first period of Pattern $\di(x)$ corresponds to the execution of $x$ phases, while the second period is a complete backtrack of the path travelled during the first period. Pattern $\di$ is designed in such a way that it offers some properties that are shown in {Section}~\ref{sss:di} and that are necessary to conduct the proof of correctness. One of the main purpose of this pattern is the following: starting from a node $v$, Pattern $\di(x)$ allows to visit all nodes of the grid at distance at most $x$ from $v$ and to traverse all edges of the grid linking two nodes at distance at most $x$ from $v$ (informally, the procedure permits to cover an area of radius $x$). {An illustration of Pattern~$\di$ is given in Figure~\ref{fig:snail}}.

		\begin{algorithm}[!h]
			\caption{Pattern $\di(\dip)$} \label{bas:di}
			\begin{algorithmic}[1]
				\STATE /* First period */
				\FOR {$i \leftarrow 1$; $i \leq \dip$; $i \leftarrow i + 1$}
					\STATE /* Phase $i$ */
					\STATE Perform $(N (SE)^i (WS)^i (NW)^i (EN)^i)$
				\ENDFOR
				\STATE /* Second period */
				\STATE $L \leftarrow$ the path followed by the agent during the first period
				\STATE Backtrack by following the reverse path $\overline{L}$
			\end{algorithmic}
		\end{algorithm}

	\subsection{Pattern $\rd$}
	
	Following the high level description of our solution (Section~\ref{sec:idea}), Pattern $\rd$ is the basic
	primitive procedure that implements the first synchronization mechanism (between two consecutive steps of a bit
	process). An agent $a_1$ executing pattern $\rd(\dip, n)$ from a node $u$ processes $n$ times pattern
	$\di(\dip)$ from node $u$. All along this execution, $a_1$ stays at distance at most $\dip$ from $u$. {Moreover},
	once the execution is over, the agent is back at $u$.

		The description of pattern $\rd$ is given in Algorithm~\ref{bas:rd}.

		\begin{algorithm}
			\caption{Pattern $\rd(\dip, n)$} \label{bas:rd}
			\begin{algorithmic}
				\STATE Execute $n$ times Pattern $\di(\dip)$
			\end{algorithmic}
		\end{algorithm}

	\subsection{Pattern $\fl$}
	
                 According to Section~\ref{sec:idea}, Pattern $\fl$ is used in particular to design the specific route that an agent follows when processing bit $0$. The description of Pattern $\fl$ is given in Algorithm~\ref{bas:fl}. It is made of two periods, the second of which is a backtrack of the first one. Pattern $\fl$ offers several properties that are proved in {Section}~\ref{sss:fl} and used in the proof of correctness. Note that, Pattern $\fl(\dip, \flp)$ executed from a node $u$ for any two integers $\dip$ and $\flp$ allows, {in particular}, an agent to perform Pattern $\di(\dip)$ from each node at distance at most $\flp$ from $u$. {An illustration of Pattern~$\fl$ is given in Figure~\ref{fig:flower}}.

		\begin{algorithm}
			\caption{Pattern $\fl(\dip, \flp)$} \label{bas:fl}
			\begin{algorithmic}[1]
				\STATE /* First period */
				\STATE Let $u$ be the current node
				\FOR {$i \leftarrow 1$; $i \leq \dip + \flp$; $i \leftarrow i + 1$}
					\FOR{$j \leftarrow 0$; $j \leq i$; $j \leftarrow j + 1$}
						\FOR{each node $v$ at distance $j$ from $u$ ordered in the
						clockwise direction from the North}
							\STATE Follow $\pa(u, v)$
							\STATE Execute $\di(i - j)$
							\STATE Follow $\pa(v, u)$
						\ENDFOR
					\ENDFOR
				\ENDFOR
				\STATE /* Second period */
				\STATE $L \leftarrow$ the path followed by the agent during the first period
				\STATE Backtrack by following the reverse path $\overline{L}$
			\end{algorithmic}
		\end{algorithm}
                 

		\begin{figure}[!h]
			\hspace*{0mm}\vfill
			\begin{center}
				\includegraphics[scale=0.3]{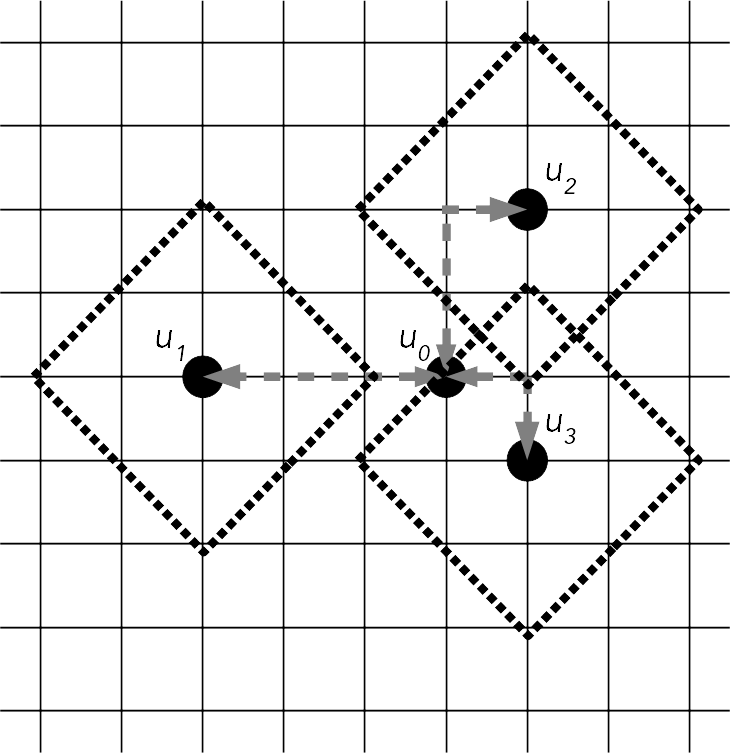}
			\end{center}
			\vfill\hspace*{0mm}
			\caption{{Illustration of a part of the route followed by an agent executing Pattern $\fl(2, 3)$ from a node $u_0$. When executing this pattern the agent has to execute many patterns $\di$ interleaved with executions of paths $\pa$ from all nodes at distance at most $3$ from $u_0$. Some of these patterns and paths are depicted in the figure. It is particularly the case of the dotted square centered at $u_1$ (resp. $u_2$ and $u_3$) that delimits the set of nodes that are visited when executing a pattern $\di(2)$ from node $u_1$ (resp. $u_2$ and $u_3$). Before executing $\di(2)$ from node $u_1$ (resp. $u_2$ or $u_3$), the agent follows $\pa(u_0, u_1)$ (resp. $\pa(u_0, u_2)$ or $\pa(u_0, u_3)$), and after executing $\di(2)$ from node $u_1$ (resp. $u_2$ or $u_3$), the agent follows the path $\pa(u_1, u_0)$ (resp. $\pa(u_2, u_0)$ or $\pa(u_3, u_0)$). These different paths $\pa$ are represented by arrows.}}
			\label{fig:flower}
		\end{figure}
\newpage

	\subsection{Pattern $\bo$}
	
		According to Section~\ref{sec:idea}, Pattern~$\bo$ is used in particular to design the specific route that an agent follows when processing bit $1$. The description of Pattern $\bo$ is given in Algorithm~\ref{bas:bo}. As for Patterns $\di$ and $\fl$, the pattern is made of two periods, the second of which corresponds to a backtrack of the first one. Properties related to this pattern are given in {Section}~\ref{sss:bo}. Note that, Pattern $\bo(\dip, \flp, \bop, \bops)$ executed from a node $u$ for any integers $\dip$, $\flp$, $\bop$ and $\bops$ allows an agent to perform {Patterns~$\fl(\dip, \flp)$ and~$\di(\dip)$} from each node at distance at most $\bop$ from $u$. 
{Parameter $h$ is an integer input that indicates in which order the agent has to visit each node at distance at most $\bop$ from $u$ (to execute Patterns~$\fl(\dip, \flp)$ and~$\di(\dip)$ from each of these nodes). Playing on this order is used for technical reasons that are detailed in the proof of Theorem~\ref{lemma6Cases}.} {An illustration of Pattern~$\bo$ is given in Figure~\ref{fig:bouquet}}.

		\begin{algorithm}
			\caption{Pattern $\bo(\dip, \flp, \bop, \bops)$} \label{bas:bo}
			\begin{algorithmic}[1]
				\STATE /* First period */
				\STATE Let $u$ be the current node
				\STATE Let $U$ be the list of nodes at distance at most $\bop$ from $u$ ordered in the order of the first visit when applying $\di(\bop)$ from node $u$
				\FOR {$i \leftarrow 0$; $i \leq 2\bop (\bop + 1)$; $i \leftarrow i + 1$}
					\STATE Let $v$ be the node with index $\bops + i \pmod{2\bop (\bop + 1) + 1}$ in $U$
					\STATE Follow $\pa(u,v)$
					\STATE Execute $\di(\dip)$
					\STATE Execute $\fl(\dip, \flp)$
					\STATE Follow $\pa(v,u)$
				\ENDFOR
				\STATE /* Second period */
				\STATE $L \leftarrow$ the path followed by the agent during the first period
				\STATE Backtrack by following the reverse path $\overline{L}$
			\end{algorithmic}
		\end{algorithm}

		\begin{figure}[!h]
			\hspace*{0mm}\vfill
			\begin{center}
				\includegraphics[scale=0.3]{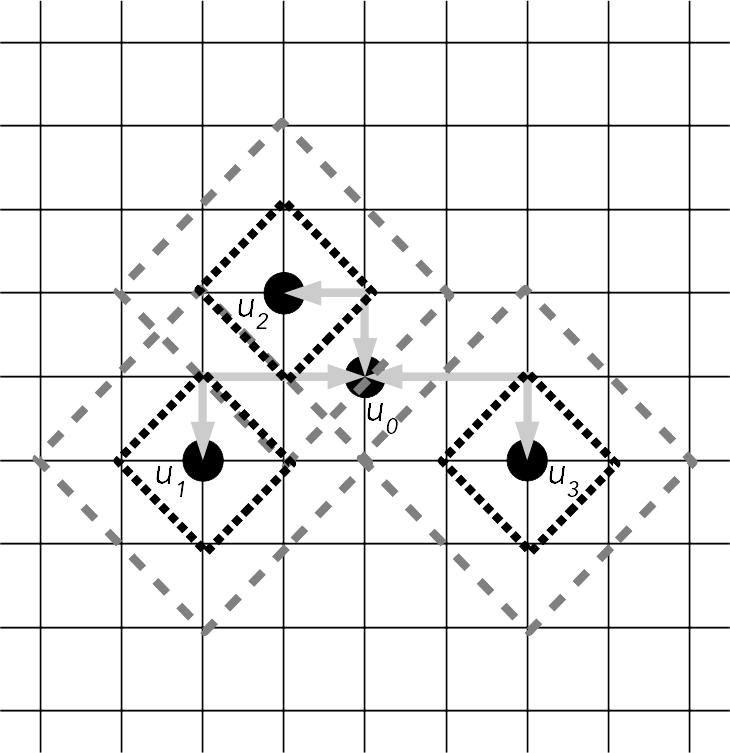}
			\end{center}
			\vfill\hspace*{0mm}
			\caption{ {Illustration of a part of the route followed by an agent executing Pattern $\bo(1, 2, 3, 0)$ from a node $u_0$. When executing this pattern the agent has to execute paths $\pa$ as well as patterns $\di$ and $\fl$ from all nodes at distance at most $3$ from $u_0$ and in particular from nodes $u_1$, $u_2$ and $u_3$. To go to these nodes from $u_0$, the agent respectively follows $\pa(u_0, u_1)$, $\pa(u_0, u_2)$ and $\pa(u_0, u_3)$. Once in node $u_1$ (resp. $u_2$ and $u_3$) the agent executes $\di(1)$, which is represented by the smallest dotted square centered at $u_1$ (resp. $u_2$ and $u_3$) and then executes $\fl(1, 2)$, which is represented by the largest dotted square centered at $u_1$ (resp. $u_2$ and $u_3$), followed by $\pa(u_1, u_0)$ (resp. $\pa(u_2, u_0)$ and $\pa(u_3, u_0)$). All paths $\pa$ are represented by arrows.}}
			\label{fig:bouquet}
		\end{figure}

\section{Main Algorithm} \label{sec:principalAlgorithm}

{In this section, we give the formal description of our solution allowing to solve rendezvous in a basic grid. We also give the main objectives of the involved subroutines and how they work at a high level. The main algorithm that solves the rendezvous in a basic grid is Algorithm~\aud{} (shown in Algorithm~\ref{AsynchronousUnknownDistance})}. As mentioned in Subsection~\ref{hood}, we use the label of an agent only when it has been transformed. Let us describe this transformation that is borrowed from~\cite{DessmarkFKP06}. Let $(b_0 b_1 \ldots b_{n-1})$ be the binary representation of the label of an agent. We define its transformed label as the binary sequence $(b_0 b_0 b_1 b_1 \ldots b_{n-1} b_{n-1} 0 1)$. This transformation permits to obtain the feature that is highlighted by the following remark.

	\begin{remark} \label{alg:rem}
		Given two distinct labels $l_a$ and $l_b$, their transformed labels are never prefixes of each other. In other words, there exists an index {$\lambda$} such that the {$\lambda$}th bit of the transformed label of $l_a$ is different from the {$\lambda$}th bit of the transformed label of $l_b$.
	\end{remark}

As explained in Section~\ref{sec:idea}, we need such a feature because our solution requires that at some point both agents follow different routes by processing different bit values.

\begin{algorithm}[H]
		\caption{\aud} \label{AsynchronousUnknownDistance}
		\begin{footnotesize}
		\begin{algorithmic}[1]
			\STATE $d \leftarrow 1$
			\WHILE {agents have not met yet} \label{main loop}
				\STATE Execute $\ite(d)$ \label{calliteration}
				\STATE $d \leftarrow 2d$ \label{d doubled}
			\ENDWHILE
		\end{algorithmic}
		\end{footnotesize}
	\end{algorithm}

{Algorithm \aud{} makes use of a subroutine, \ie Procedure~$\ite$. When an agent executes this procedure with a parameter $\alpha$ that is a ``good'' assumption \ie that upperbounds the initial distance $D$ and the value {$\lambda$} of the smallest bit position for which both transformed labels differ, we have the guarantee that rendezvous occurs by the end of this execution. In the rest of this section, we assume that $\alpha$ is the smallest good assumption that upperbounds $D$ and {$\lambda$}.} 

	
{The code of Procedure~$\ite$ is given in Algorithm~\ref{iteration}. It} can be divided into two parts. The first part consists of the execution of Procedure~$\p$ (line 1 of Algorithm~\ref{iteration}) and corresponds to the second synchronization mechanism mentioned in Section~\ref{sec:idea}. The main feature of this procedure is the following: when the earlier agent finishes the execution of $\p(\alpha)$ within the execution of $\ite(\alpha)$, we have the guarantee that the later agent has at least started to execute $\ite$ with parameter $\alpha$ (actually, as explained below, we have even the guarantee that most of $\p(\alpha)$ has been executed by the later agent). Procedure~$\p$ is presented below. The second part of Procedure~$\ite$ (cf. lines $2-19$ of Algorithm~\ref{iteration}) consists in processing the bits of the transformed label one by one. More precisely when processing a given bit in a call to Procedure $\ite(d)$, the agent acts in steps $0,1,\ldots,2d(d+1)$: After each of these steps, the agent executes Pattern $\rd$ whose role is described below. In each of these steps, the agent executes $\fl$ (resp. $\bo$) if the bit it is processing is $0$ (resp. $1$). These patterns of moves (cf. Algorithms~\ref{bas:fl} and~\ref{bas:bo} in Section~\ref{sec:bas}) are made in such a way that rendezvous occurs by the time any agent finishes the process of its {$\lambda$}th bit in $\ite(\alpha)$ if we have the following synchronization property. Each time any of the agents starts executing a step {$j$} during the process of its {$i$}th bit in $\ite(\alpha)$, the other agent has finished the execution of either step {$j-1$} in the {$i$}th bit process of $\ite(\alpha)$ if {$j>0$}, or the last step of the {$(i-1)$}th bit process of $\ite(\alpha)$ if {$j=0$ and $i>0$}.
To obtain such a synchronization, an agent executes what we called the first synchronization mechanism in the previous section (cf. line $15$ in Algorithm~\ref{iteration}) after each step of a bit process. Actually, this mechanism relies on procedure $\rd$, the code of which is given in Algorithm~\ref{bas:di}. Note that the total number of steps, and thus of executions of $\rd$, in $\ite(\alpha)$ is $2\alpha^2(\alpha+1)+\alpha$. For every {$0\leq k \leq 2\alpha^2(\alpha+1)+\alpha$}, the {$k$}th execution of $\rd$ in $\ite(\alpha)$ by an agent permits to force the other agent to finish the execution of its {$k$}th step in $\ite(\alpha)$ by repeating a pattern $\di$ (its main purpose is described just above its code given by Algorithm~\ref{bas:rd}): {with} the appropriate parameters, this pattern $\di$ covers any pattern ($\fl$ or $\bo$) made in the {$k$}th step of $\ite(\alpha)$ and the number of times it is repeated is at least the maximal number of edge traversals we can make in the {$k$}th step of $\ite(\alpha)$.



{Algorithm~\ref{Pprocedure} gives the code of Procedure~$\p$. Procedure~$\p$ is made of two parts: the executions of Procedure~$\pp$ (lines~$1-3$ of Algorithm~\ref{Pprocedure}), and the calls to the patterns $\bo$ and $\rd$ (lines~$4-5$ of Algorithm~\ref{Pprocedure}). When $\p$ is executed with parameter $\alpha$ (which is a good assumption), the first part ensures that the later agent has at least completed every execution of $\ite$ with a parameter that is smaller than $\alpha$, while the second part ensures that the later agent has completed almost the entire execution of $\p(\alpha)$ (more precisely, when the earlier agent finishes the second part, we have the guarantee that {it remains for the later agent to execute at most the last line before completing its own execution of $\p(\alpha)$).}}



	\begin{algorithm}[H]
		\caption{$\ite(d)$} \label{iteration}
		\begin{footnotesize}
		\begin{algorithmic}[1]
			\STATE Execute $\p(d)$ \label{MainFireworks'Call}
			\STATE {$radius \leftarrow 2d^4+3d$}
			\STATE $i \leftarrow 1$
			\WHILE {$i \leq d$} \label{secondLoop}
				\STATE $j \leftarrow 0$
				\WHILE {$j \leq 2d (d + 1)$} \label{firstLoop}
					\STATE // Begin of step $j$
					\IF{the length of the transformed label is strictly greater than $i$, or its $i$th bit is $0$} \label{readingTest}
						\STATE Execute $\fl(radius, d)$ \label{callFL}
					\ELSE
						\STATE Execute $\bo(radius, d, d, j)$ \label{callBO}
					\ENDIF
					\STATE // End of step $j$
					\STATE $radius \leftarrow radius + 3d$
					\STATE Execute $\rd(radius, C(\bo(radius - 3d, d, d, j)))$ \label{BitSynchronization}
					\STATE $j \leftarrow j + 1$
				\ENDWHILE
				\STATE $i \leftarrow i + 1$
			\ENDWHILE
		\end{algorithmic}
		\end{footnotesize}
	\end{algorithm}


{To give further details on Procedure~$\p$, let us first describe Procedure~$\pp$ (its code is given in Algorithm~\ref{patternPP}). When the earlier agent completes the execution of $\pp(2i, d)$ with $i$ some power of two, assuming that the later agent had already completed $\ite(i)$, we have the guarantee that the later agent has completed its execution of $\ite(2i)$. To ensure this, we regard the execution of $\ite(2i)$ as a sequence of calls to basic patterns (namely $\rd$, $\fl$ and $\bo$), which is formally defined in Definition~\ref{def:decompo}. This sequence is what we meant when talking about ``the pattern drawn on the grid'' in Subsection~\ref{hood}. {The sequence of calls to basic patterns of the earlier agent in $\ite(2i)$ is quite similar to the one of the later agent: they have the same length and the $s$th pattern of one sequence is $\rd$ iff the $s$th pattern of the other sequence is $\rd$. In fact, the only difference, due to distinct transformed labels, is that if the $s$th pattern of one sequence is $\fl$ (resp. $\bo$), the $s$th pattern of the other sequence may be either $\fl$ or $\bo$.}

{For each basic pattern $p_s$ in its sequence, the earlier agent executes another pattern $p'_s$ at the end of which we ensure that the later agent has completed the execution of the $s$th basic pattern of its own sequence. Whether $p_s$ is $\fl$ or $\bo$, $p'_s$ is the same so that the earlier agent does not need to know the type of the $s$th basic pattern in the sequence of the later agent in order to push it (and by extension, does not require the knowledge of the label of the later agent). More precisely, $p'_s$ is chosen as follows.}


{If $p_s$ is either Pattern~$\fl$ or Pattern~$\bo$, then $p'_s$ is Pattern~$\rd$: we use the same idea here as for the first synchronization mechanism. If $p_s$ is Pattern~$\rd$, then $p'_s$ is Pattern~$\fl$}, {relying on a property of the route $X\overline{X}$ (with $X$ any non-empty route) introduced in the last paragraph of Subsection~\ref{hood}: if both agents follow this route {concurrently} from the same node, then they meet. Pattern~$\di$ can be seen as such a route, and Procedure~$\fl$ (whose code is shown in Algorithm~\ref{bas:fl}) consists in executing Pattern~$\di$ from each node at distance at most $\alpha$. Hence, unless they meet, the later agent completes its execution of Pattern~$\rd$ before the earlier one starts executing $\di$ from the same node.} Note that $\pp$ uses as many patterns as the number of basic patterns in the sequence it is supposed to push: this and the fact of doubling the value of the input parameter of Procedure~$\ite$ in Algorithm~\ref{AsynchronousUnknownDistance} contribute in particular to keep the polynomiality of our solution.}


{Thus, once the earlier agent completes the first part of $\p(\alpha)$, the later one has at least started the execution of $\ite(\alpha)$ (and thus of the first part of $\p(\alpha)$). At this point, we might think at first glance that we just shifted the problem. Indeed, the number of edge traversals that has to be made to complete all the executions of $\ite$ prior to $\ite(\alpha)$ is quite the same, if not higher, than the number of edge traversals that has to be made when executing the first part of $\p(\alpha)$. Hence the difference between both agents in terms of edge traversals has not been improved here. However, a crucial and decisive progress has nonetheless been done: contrary \emph{a priori} to the series of $\ite$ executed before {$\ite(\alpha)$}, the first part of $\p(\alpha)$ can be pushed at low cost via the execution of Pattern $\bo$ (line~4 of Algorithm~\ref{Pprocedure}) by the earlier agent. Actually this pattern corresponds to the kind of route, described at the end of Subsection~\ref{hood} for the second synchronization mechanism, which is of small length compared to the sequence of patterns it can push. Indeed, the first part of $\p(\alpha)$ can be viewed as a ``large'' sequence of Patterns~$\di$ and~$\fl$: however $\di$ and $\fl$ can be seen (by analogy with Subsection~\ref{hood}) as routes of the form $A\overline{A}$ and $B\overline{B}$ respectively, while Pattern $\bo$ executes $\di$ and $\fl$ (i.e., $A\overline{A}B\overline{B}$) once from at least each node at distance at most $\alpha$.} 

{Note that when the earlier agent has completed the execution of $\bo$ in $\p(\alpha)$, the later agent has at least started the execution of Pattern $\bo$ in $\p(\alpha)$. Hence, there is still a difference between both agents, but it has been considerably reduced: it is now relatively small so that we can handle it pretty easily afterwards.}

\begin{algorithm}[H]
		\caption{$\p(d)$} \label{Pprocedure}
		\begin{footnotesize}
		\begin{algorithmic}[1]
			\FOR {$i \leftarrow 1$; $i < d$; $i \leftarrow 2i$}
				\STATE Execute $\pp(i, d)$
			\ENDFOR
			\STATE {Execute $\bo(2d^4, d, d, 0)$}
			\STATE {Execute $\rd(2d^4+3d, C(\bo(2d^4, d, d, 0)))$} \label{lastSnailSynchronisation}
		\end{algorithmic}
		\end{footnotesize}
	\end{algorithm}


\begin{definition}[Basic and Perfect Decomposition] \label{def:decompo}
Given a call $P$ to an algorithm, we say that the basic decomposition of $P$, denoted by $\mathcal{BD}(P)$, is $P$ itself if $P$ corresponds to a basic pattern, the type of which belongs to $\{\rd;\fl;\bo\}$. Otherwise, {if $P$ contains no call or contains a moving instruction outside of every call} then $\mathcal{BD}(P)=\perp$, else $\mathcal{BD}(P)=\mathcal{BD}(x_1),\mathcal{BD}(x_2),\ldots,\mathcal{BD}(x_n)$ where $x_1,x_2,\ldots,x_n$ is the sequence (in the order of execution) of all the calls in $P$ that are children of $P$. We say that $\mathcal{BD}(P)$ is a perfect decomposition if it does not contain any $\perp$.
\end{definition}
\begin{remark}
\label{rem:per}
The basic decomposition of every call to Procedure~$\ite$ is perfect.
\end{remark}

\begin{algorithm}[H]
		\caption{$\pp(i, d)$} \label{patternPP}
		\begin{footnotesize}
		\begin{algorithmic}[1]
			\FOR {each $p$ in $\mathcal{BD}(\ite(i))$}
				\IF{$p$ is a call to pattern $\rd$ with value $\dip$ as first parameter}
					\STATE Execute $\fl(\dip, d)$
				\ELSE
					\STATE /* pattern $p$ is either a call to pattern $\fl$ or a call to pattern $\bo$ (in view of the above remark) and has at least two parameters */
					\STATE Let $\dip$ (resp. $\flp$) be the first (resp. the second) parameter of $p$
					\STATE Execute $\rd(d + \dip + 2\flp, C(\bo(\dip, \flp, \flp, 0)))$
				\ENDIF
			\ENDFOR 
		\end{algorithmic}
		\end{footnotesize}
	\end{algorithm}

\section{Proof of correctness and cost analysis}
\label{sec:proof}

	The purpose of this section is to prove that Algorithm~\aud{} ensures rendezvous in the basic grid at cost $\in\mathrm{O}((\dist+\sholab)^{33})$ with $\dist$ the initial distance between the agents and $\sholab$, the length of the shortest
	label. To this end, the section is made of four subsections. The first two subsections are dedicated to
	technical results about the basic patterns presented in Section~\ref{sec:bas} and synchronization properties
	of Algorithm \aud{}, which are used in turn to carry out the proof of correctness and the cost analysis of
	Algorithm \aud{} that are presented in the last two subsections.

\subsection{Properties of the basic patterns} \label{sec:lem}

	This subsection is dedicated to the presentation of some technical {results} about the basic patterns described in Section~\ref{sec:bas}. They are used in the following subsections to prove the correctness of Algorithm~\ref{AsynchronousUnknownDistance}.

	\subsubsection{Vocabulary}

		Before going any further, we need to introduce some extra vocabulary in order to facilitate the presentation of the next properties and lemmas.

		\begin{definition}
		\label{def:prec}
			A pattern execution $A$ precedes another pattern execution $B$ iff the beginning of $A$ occurs {by} the beginning of $B$.
		\end{definition}

		\begin{definition}
			Two pattern executions $A$ and $B$ are concurrent iff:
			\begin{itemize}
				\item pattern execution $A$ does not finish before pattern execution $B$ starts
				\item pattern execution $B$ does not finish before pattern execution $A$ starts
			\end{itemize}
		\end{definition}

		By misuse of {language}, in the rest of this paper we will sometimes say ``a pattern'' instead of ``a pattern execution''.

		Hereafter we say that a pattern $A$ \emph{concurrently precedes} a pattern $B$, iff $A$ and $B$ are concurrent, and $A$ precedes $B$.

		\begin{definition} \label{def:push}
			A pattern $A$ pushes a pattern $B$ if for every execution in which {$B$ precedes $A$}, agents meet before the end of the execution of {$A$} or $B$ finishes before $A$.
		\end{definition}

		In the sequel, given two sequences of moving instructions $X$ and $Y$, we will say that $X$ is a prefix of $Y$ if $Y$ can be viewed as the execution of the sequence $X$ followed by another {(possibly empty) sequence}.

	\subsubsection{Pattern $\di$}\label{sss:di}

		In this {section}, we show some properties related to Pattern~$\di$.

		Proposition~\ref{lem:di:lem} follows by induction on the input parameter of Pattern~$\di$ and Proposition~\ref{lem:di:pre} follows from Algorithm~\ref{bas:di}.

		\begin{proposition} \label{lem:di:lem}
			Let $\dip$ be any positive integer. Starting from a node $v$, Pattern~$\di(\dip)$ guarantees the following properties:
			\begin{enumerate}
				\item it allows to visit all nodes of the grid at distance at most $\dip$ from $v$
				\item it allows to traverse all edges of the grid linking two nodes at distance at most $\dip$ from $v$
			\end{enumerate}
		\end{proposition}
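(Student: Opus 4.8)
The plan is to prove both properties simultaneously by induction on the input parameter $\dip$, after isolating the essential geometric content in a single phase of the first period. I would fix the starting node $v$ as the origin and orient the grid so that North, East, South, West are $(0,+1),(+1,0),(0,-1),(-1,0)$; write $d(\cdot)$ for the Manhattan distance to $v$ and call the set of nodes at distance exactly $k$ the \emph{layer} $k$. The first period of $\di(\dip)$ executes, for $i=1,\dots,\dip$, the block $N(SE)^i(WS)^i(NW)^i(EN)^i$, which I will call \emph{phase} $i$. The first observation is a displacement computation: the initial $N$ contributes $(0,+1)$, while the four staircases $(SE)^i,(WS)^i,(NW)^i,(EN)^i$ contribute $(+i,-i),(-i,-i),(-i,+i),(+i,+i)$, summing to $(0,0)$; hence each phase has net displacement $(0,+1)$, so phase $i$ starts at $(0,i-1)$ and ends at $(0,i)$. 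In particular the phases line up correctly and the first $\dip-1$ phases of $\di(\dip)$ coincide with the first period of $\di(\dip-1)$.

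The heart of the argument is the per-phase claim: starting from $(0,i-1)$, phase $i$ visits every node of layer $i$, traverses every edge joining a node of layer $i-1$ to a node of layer $i$, and visits no node of layer $>i$. After the initial move to the top corner $(0,i)$, I would analyse $(SE)^i$ as a zigzag $(0,i)\to(0,i-1)\to(1,i-1)\to(1,i-2)\to\cdots\to(i,0)$ alternating between layer $i$ and layer $i-1$ inside the closed North-East quadrant; a short bookkeeping shows that each of its ``down'' and ``right'' edges joins the consecutive layers $i-1$ and $i$, that every layer-$i$ node $(k,i-k)$ of that quadrant is visited, and that for such a node both of its layer-$(i-1)$ neighbours $(k-1,i-k)$ and $(k,i-k-1)$ are reached (the corners $(0,i)$ and $(i,0)$ having a single inner neighbour each, also covered). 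The three remaining staircases are the images of $(SE)^i$ under the quarter-turn rotations of the grid about $v$, since a clockwise $90^\circ$ rotation sends the direction sequence $SE$ to $WS$, then to $NW$, then to $EN$, and sends the corner $(0,i)$ to $(i,0)$, then $(0,-i)$, then $(-i,0)$. As distance to $v$ is rotation-invariant, this symmetry transfers the quadrant computation verbatim, so the four staircases together exhaust all of layer $i$ and all layer-$(i-1)$/layer-$i$ edges.

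With the per-phase claim in hand the induction closes cleanly. For the base case $\dip=0$ the first period is empty and $v$ is the only node at distance at most $0$, with no qualifying edges, so both statements hold vacuously. For the inductive step, the first $\dip-1$ phases coincide with the first period of $\di(\dip-1)$, which by the induction hypothesis already visits every node and every edge up to layer $\dip-1$ (the backtrack of $\di(\dip-1)$ only retraces edges, hence adds nothing new); phase $\dip$ then supplies layer $\dip$ together with the layer-$(\dip-1)$/layer-$\dip$ edges. Finally I would invoke the elementary fact that two adjacent nodes always have distances to $v$ differing by exactly $1$: this shows that \emph{every} edge joining two nodes at distance at most $\dip$ is precisely a layer-$(k-1)$/layer-$k$ edge for a unique $1\le k\le\dip$, so the union of the phases covers exactly the required edge set, and the backtrack of the second period introduces nothing new.

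The main obstacle I anticipate is the per-phase bookkeeping of the second paragraph: one must verify that the four staircases together traverse \emph{all} boundary edges with no omission at the seams where consecutive staircases meet at the corners $(0,\pm i),(\pm i,0)$, and that no edge reaching out to layer $i+1$ is mistakenly demanded. Making the rotational-symmetry reduction precise (identifying each staircase as a $90^\circ$ rotate of $(SE)^i$ about $v$ and transferring the single-quadrant computation) is what keeps this otherwise routine verification short, while the displacement computation is exactly what guarantees the phases are positioned so that the relevant symmetry is centred at the common point $v$.
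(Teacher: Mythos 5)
Your proof is correct and follows the same route the paper takes: the paper simply asserts that the proposition ``follows by induction on the input parameter of Pattern~$\di$,'' which is exactly your induction on $\dip$. You merely supply the details the paper leaves implicit (the net displacement $(0,+1)$ per phase, the per-phase staircase/rotation analysis covering layer $i$ and the layer-$(i-1)$/layer-$i$ edges, and the observation that adjacent nodes have distances to $v$ differing by exactly $1$), and these details check out.
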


		\begin{proposition} \label{lem:di:pre}
			Given two integers $\dip_1 \leq \dip_2$, the first period of Pattern $\di(\dip_1)$ is a prefix of the first period of Pattern $\di(\dip_2)$.
		\end{proposition}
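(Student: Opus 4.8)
The plan is to unfold the definition of the first period of Pattern~$\di$ and read off the prefix relation directly from the loop structure of Algorithm~\ref{bas:di}. The first period of $\di(\dip)$ is precisely the execution of the \textbf{for}-loop ranging over $i = 1, 2, \ldots, \dip$, whose body at iteration $i$ is the fixed sequence of moving instructions $N(SE)^i(WS)^i(NW)^i(EN)^i$. The crucial observation is that this body depends only on the loop index $i$ and on nothing else: in particular, it is a sequence of cardinal-direction moves that is independent of the agent's current node, of the total parameter $\dip$, and of the history of the walk. Consequently, denoting by $\sigma_i$ the instruction block performed during phase $i$, the first period of $\di(\dip)$ is exactly the concatenation $\sigma_1 \sigma_2 \cdots \sigma_{\dip}$.

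First I would fix $\dip_1 \leq \dip_2$ and write the two first periods as $\sigma_1 \cdots \sigma_{\dip_1}$ and $\sigma_1 \cdots \sigma_{\dip_2}$ respectively, using the same blocks $\sigma_i$ in both, since $\sigma_i$ does not depend on the parameter passed to $\di$. Because $\dip_1 \leq \dip_2$, the second sequence is the first one followed by the (possibly empty) tail $\sigma_{\dip_1 + 1} \cdots \sigma_{\dip_2}$. By the definition of prefix recalled above (a sequence $X$ is a prefix of $Y$ when $Y$ can be viewed as $X$ followed by a possibly empty sequence), this is exactly the statement that the first period of $\di(\dip_1)$ is a prefix of the first period of $\di(\dip_2)$.

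There is essentially no genuine obstacle here; the content of the proposition is entirely structural and follows by inspection of the loop, which is consistent with the remark that it \emph{follows from the description of Algorithm~\ref{bas:di}}. The only point that deserves to be made explicit --- and which I would emphasize --- is the position-independence of each phase block $\sigma_i$: this is what guarantees that the blocks appearing in $\di(\dip_1)$ and in $\di(\dip_2)$ coincide index-for-index, so that matching them up is legitimate rather than merely suggestive. With that observation in place, the prefix claim is immediate.
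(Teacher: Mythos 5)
Your proof is correct and matches the paper's approach: the paper simply asserts that Proposition~\ref{lem:di:pre} ``follows from the description of Algorithm~\ref{bas:di}'', and your unfolding of the loop into position-independent blocks $\sigma_1\cdots\sigma_{\dip}$ is exactly the by-inspection argument being invoked. Making the position-independence of each block explicit is a reasonable elaboration, not a deviation.
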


		\begin{lemma} \label{lem:di:rdv}
			Let $\dip_1$ and $\dip_2$ be two positive integers such that $\dip_1 \leq \dip_2$. Let $a_1$ and $a_2$ be two agents executing respectively Patterns $\di(\dip_1)$ and $\di(\dip_2)$ both from the same node such that the execution of Pattern $\di(\dip_1)$ concurrently precedes the execution of Pattern $\di(\dip_2)$. Let $t_1$ (resp. $t_2$) be the time when agent $a_1$ (resp. $a_2$) completes the execution of Pattern $\di(\dip_1)$ (resp. $\di(\dip_2)$). Agents $a_1$ and $a_2$ meet by time $min(t_1, t_2)$.
		\end{lemma}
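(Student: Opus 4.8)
The plan is to exploit the nested structure of the two routes provided by Proposition~\ref{lem:di:pre}. Writing the pattern $\di(\dip)$ as a first period followed by its exact backtrack, let $F_1$ denote the first period of $\di(\dip_1)$ and let $C = (v = p_0, p_1, \ldots, p_m)$ be the walk it traces, so that $m$ is its number of edges. By Proposition~\ref{lem:di:pre}, the first period of $\di(\dip_1)$ is a prefix of the first period of $\di(\dip_2)$; hence $a_2$ begins its execution by traversing exactly the walk $C$ forward, before possibly continuing further. Throughout its whole execution $a_1$ follows $F_1\overline{F_1}$ and therefore stays on $C$ (forward, then backtracking along it), so its location can be encoded by a position-along-$C$ coordinate $c_1(t) \in [0,m]$ (the distance from $v$ measured along $C$); likewise, as long as $a_2$ is still traversing $C$ forward, its location is encoded by $c_2(t) \in [0,m]$ on the \emph{same} walk $C$. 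The decisive point is that $c_1(t) = c_2(t)$ forces both agents to occupy the very same point of $C$, i.e.\ a meeting at a node or inside an edge.

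I would first dispose of the degenerate case $\dip_1 = 0$: here $F_1$ is empty, $a_1$ never leaves $v$, and concurrency forces $a_2$ to be at $v$ when it starts, giving an immediate meeting by $\min(t_1,t_2)$. Assume now $m \ge 1$. Let $s_1 \le s_2$ be the starting times of $a_1$ and $a_2$ (the inequality being ``$\di(\dip_1)$ precedes $\di(\dip_2)$''), and recall that concurrency gives $s_2 \le t_1$. Let $\sigma$ be the first time at which $a_2$ completes $C$, so that $c_2(\sigma) = m$ and $a_2$ remains on $C$ on $[s_2,\sigma]$. Since after time $\sigma$ agent $a_2$ must still perform the nonempty backtrack $\overline{F_1}$ (and possibly more), we have $\sigma < t_2$, and consequently $\min(t_1,\sigma) \le \min(t_1,t_2)$.

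The core of the proof is then a sign-change of $\delta(t) := c_1(t) - c_2(t)$, which is continuous on $[s_2, \min(t_1,\sigma)]$ because both agents stay on $C$ there. At $t = s_2$ one has $c_2 = 0$, hence $\delta(s_2) = c_1(s_2) \ge 0$. At $t = \min(t_1,\sigma)$ there are two cases: if $t_1 \le \sigma$, then $a_1$ has returned to $v$, so $c_1(t_1)=0$ and $\delta(t_1) = -c_2(t_1) \le 0$; if $\sigma < t_1$, then $c_2(\sigma) = m \ge c_1(\sigma)$, so $\delta(\sigma) \le 0$. In both cases $\delta$ is nonnegative at the left endpoint and nonpositive at the right endpoint, so by the intermediate value theorem it vanishes at some $t^{\ast} \in [s_2, \min(t_1,\sigma)]$; at that instant the agents coincide on $C$, that is, they meet, and $t^{\ast} \le \min(t_1,\sigma) \le \min(t_1,t_2)$, as required.

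I expect the main obstacle to be making the informal slogan ``same position along $C$ implies physical coincidence'' fully rigorous in the presence of an adversary that manipulates the agents back and forth inside each edge. The two observations that keep the argument clean are: (i) both coordinates $c_1,c_2$ are continuous in time but need \emph{not} be monotone, so only the intermediate value theorem—and no comparison of speeds—is required; and (ii) both coordinates are read off the \emph{identical} parametrized walk $C$ (this is exactly what Proposition~\ref{lem:di:pre} buys us), so equality of coordinates genuinely means the same point of the grid. A secondary point demanding care is the bookkeeping ensuring the crossing time does not exceed $\min(t_1,t_2)$, which is precisely why I isolate the time $\sigma$ and use $\sigma < t_2$.
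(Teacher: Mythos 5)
Your proof is correct and takes essentially the same approach as the paper's: both use Proposition~\ref{lem:di:pre} to place the two agents on the same parametrized walk and then argue that $a_2$, starting behind $a_1$ on that walk, must either catch it on the forward portion or cross it head-on during the backtrack---your function $\delta = c_1 - c_2$ together with the intermediate value theorem is a rigorous rendering of the paper's informal ``catch or cross'' dichotomy. If anything, your version is more careful than the paper's (explicit continuity of the walk coordinates under back-and-forth adversarial motion, the bookkeeping $\sigma < t_2$ guaranteeing the meeting time is at most $\min(t_1,t_2)$, and the degenerate case $\dip_1 = 0$).
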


		\begin{proof}

			In view of Proposition~\ref{lem:di:pre}, the first period of $\di(\dip_1)$ is a prefix of the first period of Pattern $\di(\dip_2)$. If the path followed by agent $a_1$ during its execution of $\di(\dip_1)$ is $e_1, e_2,\ldots, e_n, \linebreak \overline{e_1, e_2,\ldots, e_n}$ (the overlined part of the path corresponds to the backtrack), then the path followed by agent $a_2$ during the execution of Pattern $\di(\dip_2)$ is $e_1, e_2,\ldots, e_n, s, \overline{e_1, e_2,\ldots, e_n, s}$ where $s$ corresponds to the edges traversed at a distance $\in \{\dip_1 + 1; \ldots; \dip_2\}$.

{We have two cases to consider. If agent $a_2$ completes $e_1, e_2,\ldots, e_n$ by the time $a_1$ completes $e_1, e_2,\ldots, e_n$, then agents $a_1$ and $a_2$ meet while they are following $e_1, e_2,\ldots, e_n$ as agent $a_1$ is the first agent that starts following $e_1, e_2,\ldots, e_n$. Otherwise, agent $a_1$ starts following $\overline{e_1, e_2,\ldots, e_n}$ while $a_2$ is still following $e_1, e_2,\ldots, e_n$: this implies that the agents meet by the time $a_1$ (resp. $a_2$) finishes $\overline{e_1, e_2,\ldots, e_n}$ (resp. $e_1, e_2,\ldots, e_n$). So, in both cases the agents meet by time $min(t_1, t_2)$, which concludes the proof of this lemma.}


		\end{proof}

	\subsubsection{Pattern $\rd$}

		This {section} is dedicated to some properties of Pattern $\rd$. Informally speaking, Lemmas~\ref{lem:rd:fl} and~\ref{lem:rd:bo} describe the fact that Pattern $\rd$ pushes respectively Pattern $\fl$ and $\bo$ when it is given appropriate parameters.

		\begin{lemma} \label{lem:rd:fl}
			Consider two nodes $v_1$ and $v_2$ separated by a distance $\delta$. Let {$\fl(\dip_1, \flp)$ and $\rd(\dip_2, n)$} be two patterns respectively executed from $v_1$ and $v_2$ with $\dip_1$, $\dip_2$, $\flp$ and $n$ positive integers. If $\dip_2 \geq \dip_1 + \flp + \delta$ and $n \geq C(\fl(\dip_1, \flp))$ then Pattern $\rd(\dip_2, {n})$ pushes Pattern $\fl(\dip_1, \flp)$.
		\end{lemma}

		\begin{proof}

			{Denote by $a_1$ and $a_2$ the agents executing respectively $\fl(\dip_1, \flp)$ and $\rd(\dip_2, n)$.} Let us suppose by contradiction that $\rd(\dip_2, n)$ does not push $\fl(\dip_1, \flp)$, which means, by Definition~\ref{def:push} that {there exists an execution in which Pattern $\fl(\dip_1, \flp)$ precedes Pattern $\rd(\dip_2, n)$ such that} {$a_1$ neither meets $a_2$ nor completes $\fl(\dip_1, \flp)$ before $a_2$ completes $\rd(\dip_2, n)$}. {Remark that this implies in particular that these patterns are concurrent.}

			When executing its $\fl(\dip_1, \flp)$ agent {$a_1$} cannot be at a distance greater than $\dip_1 + \flp$ from its initial position {$v_1$} and thus cannot be at a distance greater than $\delta + \dip_1 + \flp$ from node {$v_2$}. {Also}, in view of Proposition~\ref{lem:di:lem}, each Pattern $\di(\dip_2)$ executed from node {$v_2$} which composes Pattern $\rd(\dip_2, n)$ allows to visit all nodes and to traverse all edges at distance at most $\dip_2$ from node {$v_2$}. Thus, each Pattern $\di(\dip_2)$ executed from node {$v_2$} allows to visit all nodes and to traverse all edges (although not necessarily in the same order) that are traversed during the execution of Pattern $\fl(\dip_1, \flp)$ from node {$v_1$}.

			Consider {the number of edge traversals completed by agent $a_1$ between the moment} when {$a_2$} starts executing any of the $\di(\dip_2)$ which compose $\rd(\dip_2, n)$ and {the moment} when {$a_2$} completes {this $\di(\dip_2)$}. If {$a_1$} has not completed a single edge traversal, then whether it was in a node or traversing an edge, it has met {$a_2$} which traverses every edge {$a_1$} traverses during its execution of $\fl(\dip_1, \flp)$. This contradicts our hypothesis, which implies that each time {$a_2$} completes one of its executions of Pattern $\di(\dip_2)$, {$a_1$} has completed at least one edge traversal. Since agent {$a_2$} executes $n \geq C(\fl(\dip_1, \flp))$ times Pattern $\di(\dip_2)$, {$a_1$} traverses at least $C(\fl(\dip_1, \flp))$ edges before {$a_2$} finishes executing its $\rd(\dip_2, n)$. As $C(\fl(\dip_1, \flp))$ is the number of edge traversals in $\fl(\dip_1, \flp)$, when {$a_2$} finishes executing Pattern $\rd(\dip_2, n)$, {$a_1$} has finished executing its Pattern $\fl(\dip_1, \flp)$, which contradicts our assumption and proves the lemma.
		\end{proof}
		
		{Using similar arguments to those used in the proof of Lemma~\ref{lem:rd:fl}, we can prove the following lemma.}

		\begin{lemma} \label{lem:rd:bo}
			Consider two nodes $v_1$ and $v_2$ separated by a distance $\delta$. Let {$\bo(\dip_1, \flp, \bop, \bops)$ and $\rd(\dip_2, n)$} be two patterns respectively executed from $v_1$ and $v_2$ with $\dip_1$, $\dip_2$, $\flp$, $\bop$, $\bops$ and $n$ positive integers. If $\dip_2 \geq \dip_1 + \flp + \bop + \delta$ and $n \geq C(\bo(\dip_1, \flp, \bop, \bops))$ then Pattern $\rd(\dip_2, n)$ pushes Pattern $\bo(\dip_1, \flp, \bop, \bops)$.
		\end{lemma}

	\subsubsection{Pattern $\fl$} \label{sss:fl}

		This {section} is dedicated to the properties of Pattern $\fl$. Informally speaking, Lemma~\ref{lem:fl:di} describes the fact that Pattern $\fl$ permits to push Pattern $\rd$ when it is given appropriate parameters. Proposition~\ref{lem:fl:pre} and Lemma~\ref{lem:fl:rdv} are respectively analogous to Proposition~\ref{lem:di:pre} and Lemma~\ref{lem:di:rdv}.

		In view of Algorithm~\ref{bas:fl}, we have the following proposition.

		\begin{proposition} \label{lem:fl:pre}
			Given four positive integers $\dip_1 + \flp_1 \leq \dip_2 + \flp_2$, the first period of $\fl(\dip_1, \flp_1)$ is a prefix of the first period of $\fl(\dip_2, \flp_2)$.
		\end{proposition}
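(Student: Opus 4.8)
The plan is to argue that the first period of Pattern~$\fl$ depends on its two parameters only through their sum, after which the prefix property follows exactly as in Proposition~\ref{lem:di:pre}. First I would inspect Algorithm~\ref{bas:fl}: the first period is governed by the outermost loop on $i$ ranging from $1$ to $\dip+\flp$, and within a fixed value of $i$ the iterations over $j$ (bounded by $i$), over $k$ (bounded by $j$), and over the nodes $v$ at distance $k$ from the current node $u$ (ordered clockwise from the North), together with the three instructions executed for each such $v$ --- namely $\pa(u,v)$, $\di(i-j)$, and $\pa(v,u)$ --- refer only to $i$, $j$, $k$, $v$ and $u$, and never to $\dip$ or $\flp$ individually.

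Consequently, I would define, for each integer $i\ge 1$, the block $B_i$ of moving instructions produced by one full iteration of the outer loop for that value of $i$; this block is completely determined by $i$ and the starting node $u$, via the subordinate loops and the calls to $\pa$ and $\di$. The first period of $\fl(\dip,\flp)$ is then exactly the concatenation $B_1 B_2 \cdots B_{\dip+\flp}$, a sequence that depends on the parameters $\dip,\flp$ only through the number $\dip+\flp$ of blocks. Since $\pa(u,v)$ and $\di(i-j)$ are themselves sequences of cardinal-direction instructions issued relative to the current node, these blocks are identical, as sequences of moving instructions, in any two executions started from the same node; so the whole argument is a statement about the produced instruction sequences and needs no assumption on where $u$ lies in the grid.

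Finally, given $\dip_1+\flp_1\le \dip_2+\flp_2$, the first period of $\fl(\dip_1,\flp_1)$ equals $B_1\cdots B_{\dip_1+\flp_1}$ while that of $\fl(\dip_2,\flp_2)$ equals $B_1\cdots B_{\dip_2+\flp_2}$; the former is obtained from the latter by deleting the trailing blocks $B_{\dip_1+\flp_1+1},\ldots,B_{\dip_2+\flp_2}$, hence is a prefix of it in the sense defined just before the statement. I do not anticipate a genuine obstacle here: the claim is essentially a syntactic observation about the loop structure of Algorithm~\ref{bas:fl}. The only point requiring care is to confirm that nothing inside the loop body breaks the dependence-only-on-the-sum property --- in particular that the $\di$ call uses the difference $i-j$, which is controlled by the loop variables rather than by $\dip,\flp$ separately, and that the inner loop bounds are $i$ and $j$ rather than $\dip$ or $\flp$; a line-by-line reading of the pseudocode settles this.
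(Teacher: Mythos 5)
Your proof is correct and takes essentially the same approach as the paper, which offers no explicit argument beyond stating that the proposition holds ``according to Algorithm~\ref{bas:fl}'': your decomposition of the first period into blocks $B_1 B_2 \cdots B_{\dip+\flp}$, each determined solely by the outer-loop index $i$ (since the loop body uses only $i$, $j$, $k$, $v$, $u$, and the call $\di(i-j)$), is precisely the line-by-line verification the paper leaves implicit. The prefix conclusion then follows immediately, exactly as for Proposition~\ref{lem:di:pre}.
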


		\begin{lemma} \label{lem:fl:di}
			Consider two nodes $v_1$ and $v_2$ separated by a distance $\delta$. Let {$\rd(\dip_1, n)$ and $\fl(\dip_2, \flp)$} be two patterns respectively executed from $v_1$ and $v_2$ with $\dip_1$, $\dip_2$, $\flp$ and $n$ positive integers. If $\flp \geq \delta$ and $\dip_1 \leq \dip_2$ then Pattern $\fl(\dip_2, \flp)$ pushes Pattern $\rd(\dip_1, n)$.
		\end{lemma}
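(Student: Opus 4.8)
The plan is to argue by contradiction, using Lemma~\ref{lem:di:rdv} as the crucial meeting guarantee and exploiting the fact, stated in the note following Algorithm~\ref{bas:fl}, that $\fl(\dip_2,\flp)$ executed from $v$ performs a contiguous execution of $\di(\dip_2)$ from each node at distance at most $\flp$ from $v$, hence in particular from $u$, since $\flp\ge\delta$ and $u$ lies at distance $\delta$ from $v$.

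First I would unwind Definition~\ref{def:push}. Let $a_1$ execute $\rd(\dip_1,n)$ from $u$ and $a_2$ execute $\fl(\dip_2,\flp)$ from $v$, and suppose for contradiction that $\fl(\dip_2,\flp)$ does not push $\rd(\dip_1,n)$. Then there is an execution in which $\rd(\dip_1,n)$ concurrently precedes $\fl(\dip_2,\flp)$, the agents do not meet before $a_1$ finishes $\rd(\dip_1,n)$, and $\rd(\dip_1,n)$ does not finish before $\fl(\dip_2,\flp)$. The last fact means that $a_2$ completes its $\fl$ no later than $a_1$ completes its $\rd$; combined with $\rd$ starting before $\fl$, this shows that the entire execution of $\fl(\dip_2,\flp)$ by $a_2$ takes place while $a_1$ is still running $\rd(\dip_1,n)$.

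Next I would isolate the decisive sub-pattern. Since $\rd(\dip_1,n)$ consists of $n$ consecutive executions of $\di(\dip_1)$ from $u$, each ending back at $u$, at every instant during $\rd(\dip_1,n)$ agent $a_1$ is in the middle of some execution of $\di(\dip_1)$ from $u$. On the other hand, as recalled above, $\fl(\dip_2,\flp)$ contains a contiguous execution of $\di(\dip_2)$ from $u$. Consider the instant at which $a_2$ begins this $\di(\dip_2)$ from $u$: by the previous paragraph, $a_1$ is then already executing some $\di(\dip_1)$ from $u$ that it began earlier, so that $a_1$'s $\di(\dip_1)$ concurrently precedes $a_2$'s $\di(\dip_2)$, both from the same node $u$, with $\dip_1\le\dip_2$.

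Finally I would invoke Lemma~\ref{lem:di:rdv}, which guarantees that the two agents meet by the time the earlier of these two executions of $\di$ completes, in particular before $a_1$ finishes its current $\di(\dip_1)$, hence before $a_1$ finishes $\rd(\dip_1,n)$. This contradicts the assumption that the agents do not meet before the end of $\rd(\dip_1,n)$, and proves the lemma. The step requiring the most care is the timing bookkeeping of the second and third paragraphs: deducing from the negation of the push property that $\fl$ is entirely nested inside $\rd$, so that $a_1$ is provably in the middle of a $\di(\dip_1)$ from $u$ precisely when $a_2$ performs $\di(\dip_2)$ from $u$, which is exactly what is needed to apply Lemma~\ref{lem:di:rdv} in the correct direction (the smaller pattern concurrently preceding the larger one).
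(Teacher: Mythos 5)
Your proposal is correct and takes essentially the same route as the paper's proof: both rest on the observation that, since $\flp \geq \delta$, Pattern $\fl(\dip_2, \flp)$ at some point executes $\di(\dip_2)$ from the node of the agent running $\rd(\dip_1, n)$, which is a back-to-back sequence of $\di(\dip_1)$'s, so Lemma~\ref{lem:di:rdv} (with $\dip_1 \leq \dip_2$) forces a meeting unless the $\rd$ has already finished. Your explicit contradiction bookkeeping---deducing that the $\fl$ is nested inside the $\rd$, so that the current $\di(\dip_1)$ concurrently precedes the $\di(\dip_2)$---merely spells out what the paper's proof asserts directly when it says the adversary must complete all the $\di(\dip_1)$'s before the critical $\di(\dip_2)$ begins.
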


		\begin{proof}

			{Denote by $a_1$ and $a_2$ the agents executing respectively $\rd(\dip_1, n)$ and $\fl(\dip_2, \flp)$.} Let us suppose by contradiction that $\fl(\dip_2, \flp)$ does not push $\rd(\dip_1, n)$ which means by Definition~\ref{def:push} that {there exists an execution in which Pattern $\rd(\dip_1, n)$ precedes Pattern $\fl(\dip_2, \flp)$ such that} {$a_1$ neither meets $a_2$ nor completes $\rd(\dip_1, n)$ before $a_2$ completes $\fl(\dip_2, \flp)$}. When executing $\fl(\dip_2, \flp)$, agent {$a_2$} performs $\di(\dip_2)$ from each node at distance at most $\flp$ from {$v_2$} with $\flp \geq \delta$. Thus, at some point, {$a_2$} executes $\di(\dip_2)$ from node {$v_1$}. In view of Lemma~\ref{lem:di:rdv}, since $\dip_2 \geq \dip_1$ and since by assumption, {$a_1$} has not finished executing its $\rd(\dip_1, n)$ when {$a_2$} starts executing Pattern $\di(\dip_2)$ from {$v_1$}, agents meet by the end of the latter {and thus before the end of $\fl(\dip_2, \flp)$} which contradicts our assumption and proves the lemma.
		\end{proof}

		\begin{lemma} \label{lem:fl:rdv}
			Consider two agents $a_1$ and $a_2$ executing respectively Patterns $\fl(\dip_1, \flp_1)$ and $\fl(\dip_2, \flp_2)$ both from node $v$ with $\dip_1$, $\dip_2$, $\flp_1$ and $\flp_2$ positive integers such that $\dip_2 + \flp_2 \geq \dip_1 + \flp_1$. Suppose that the execution of $\fl(\dip_1, \flp_1)$ by $a_1$ concurrently precedes the execution of $\fl(\dip_2, \flp_2)$ by $a_2$. Let $t_1$ (resp. $t_2$) be the time when agent $a_1$ (resp. $a_2$) completes its execution of Pattern $\fl(\dip_1, \flp_1)$ (resp. $\fl(\dip_2, \flp_2)$). Agents $a_1$ and $a_2$ meet by time $min(t_1, t_2)$.
		\end{lemma}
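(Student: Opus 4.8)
The plan is to mirror the proof of Lemma~\ref{lem:di:rdv} almost verbatim, with Proposition~\ref{lem:fl:pre} playing the role that Proposition~\ref{lem:di:pre} played there. First I would invoke the hypothesis $\dip_2 + \flp_2 \geq \dip_1 + \flp_1$ together with Proposition~\ref{lem:fl:pre} to conclude that the first period of $\fl(\dip_1, \flp_1)$ is a prefix of the first period of $\fl(\dip_2, \flp_2)$. Since, by the description of Algorithm~\ref{bas:fl}, each execution of Pattern $\fl$ consists of a first period followed by its exact backtrack, I can then write the whole path traversed by $a_1$ as $e_1, e_2, \ldots, e_n, \overline{e_1, e_2, \ldots, e_n}$, and the path traversed by $a_2$ as $e_1, e_2, \ldots, e_n, s, \overline{e_1, e_2, \ldots, e_n, s}$, where $s$ is the (possibly empty) suffix of $a_2$'s first period that extends the common prefix.

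Next I would exploit the concurrency hypothesis. Because $\fl(\dip_1, \flp_1)$ concurrently precedes $\fl(\dip_2, \flp_2)$, agent $a_1$ starts first and has not finished when $a_2$ starts; hence at the instant $a_2$ begins traversing $e_1, \ldots, e_n$ from $u$, agent $a_1$ is located somewhere along its own path $e_1, \ldots, e_n, \overline{e_1, \ldots, e_n}$. The crux is then the same ``there-and-back'' meeting argument as in Lemma~\ref{lem:di:rdv}: both agents trace the common edge-walk $e_1, \ldots, e_n$ starting from $u$, with $a_1$ being ahead; either $a_2$ catches $a_1$ while $a_1$ is still on the forward traversal $e_1, \ldots, e_n$, or, once $a_1$ turns around onto $\overline{e_1, \ldots, e_n}$, the two agents travel toward each other along the same walk and must coincide. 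In either case they meet before $a_1$ completes its path (so before $t_1$) and while $a_2$ is still within the common prefix of its first period (so before $t_2$), which yields a meeting by time $\min(t_1, t_2)$.

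The only point that requires a little care, and which I expect to be the main (though modest) obstacle, is that the common prefix $e_1, \ldots, e_n$ is in general not a simple path: Pattern $\fl$ repeatedly follows $\pa(u,v)$, executes $\di$, and returns via $\pa(v,u)$, so the walk revisits nodes. To make the meeting argument fully rigorous I would parametrize this walk by arc length $\ell \in [0, n]$ and track, as continuous functions of time, the arc-length positions of $a_1$ and $a_2$ along it; an intermediate-value argument on the difference of these positions then forces a time at which the two agents occupy the same point of the walk, even in the presence of self-intersections, exactly as is implicitly done for Pattern $\di$. Everything else is a direct transcription of the earlier proof, so that no new idea beyond Proposition~\ref{lem:fl:pre} is needed.
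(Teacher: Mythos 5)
Your proposal matches the paper's own proof essentially verbatim: both invoke Proposition~\ref{lem:fl:pre} to write $a_1$'s route as $e_1,\ldots,e_n,\overline{e_1,\ldots,e_n}$ and $a_2$'s as $e_1,\ldots,e_n,s,\overline{e_1,\ldots,e_n,s}$, and then apply the same there-and-back catching argument as in Lemma~\ref{lem:di:rdv}. Your additional arc-length/intermediate-value remark only makes rigorous what the paper (for both Lemma~\ref{lem:di:rdv} and this lemma) leaves implicit about self-intersecting walks, so it is a welcome refinement rather than a departure.
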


		\begin{proof}

			This proof is similar to the proof of Lemma~\ref{lem:di:rdv}. In view of Proposition~\ref{lem:fl:pre}, if the path followed by agent $a_1$ during its execution of $\fl(\dip_1, \flp_1)$ is $e_1, e_2,\ldots, e_n, \overline{e_1, e_2,\ldots, e_n}$ (the overlined part of the path corresponds to the backtrack), then the path followed by agent $a_2$ during the execution of Pattern $\fl(\dip_2, \flp_2)$ is $e_1, e_2,\ldots, e_n, s, \overline{e_1, e_2,\ldots, e_n, s}$ where $s$ corresponds to the edges traversed from the $(\dip_1 + \flp_1 + 1)$-th iteration of the main loop of Pattern $\fl$ to its $(\dip_2 + \flp_2)$-th iteration. 
{We have two cases to consider. If agent $a_2$ completes $e_1, e_2,\ldots, e_n$ by the time $a_1$ completes $e_1, e_2,\ldots, e_n$, then agents $a_1$ and $a_2$ meet while they are following $e_1, e_2,\ldots, e_n$ as agent $a_1$ is the first agent that starts following $e_1, e_2,\ldots, e_n$. Otherwise, agent $a_1$ starts following $\overline{e_1, e_2,\ldots, e_n}$ while $a_2$ is still following $e_1, e_2,\ldots, e_n$: this implies that the agents meet by the time $a_1$ (resp. $a_2$) finishes $\overline{e_1, e_2,\ldots, e_n}$ (resp. $e_1, e_2,\ldots, e_n$). So, in both cases the agents meet by time $min(t_1, t_2)$, which concludes the proof of this lemma.}


		\end{proof}

	\subsubsection{Pattern $\bo$}\label{sss:bo}

Informally speaking, the following lemma highlights the fact that Pattern $\bo$ can push ``a lot of basic patterns''
under some conditions. In other words, we can force an agent to make a lot of edge traversals ``at relative low cost''.

		\begin{lemma} \label{lem:bo}
			Consider two nodes $v_1$ and $v_2$ separated by a distance $\delta$. Let {$S$ and $\bo(\dip_1, \flp_1, \bop, \bops)$ be respectively a sequence of Patterns $\rd$ and $\fl$ executed from $v_1$ and a pattern executed from $v_2$ with $\dip_1$, $\flp_1$}, $\bop$ and $\bops$ four positive integers. If $\bop \geq \delta$ and for each Pattern $\rd$ $R$ and Pattern $\fl$ $B$ belonging to $S$, {$\dip_1 + \flp_1$} is greater than or equal to the sum of the parameters of $B$, and {$\dip_1$} is greater than or equal to the first parameter of $R$, then Pattern {$\bo(\dip_1, \flp_1, \bop, \bops)$} pushes $S$.
		\end{lemma}

		\begin{proof}

			{Denote by $a_1$ and $a_2$ the agents executing respectively $S$ and $\bo(\dip_1, \flp_1, \bop, \bops)$.} In order to prove that the execution of Pattern $\bo(\dip_1, \flp_1, \bop, \bops)$ by {$a_2$} pushes the sequence of {patterns} $S$, let us suppose by contradiction that {there exists an execution in which $S$ precedes Pattern $\bo(\dip_1, \flp_1, \bop, \bops)$ such that} {$a_1$ neither meets $a_2$ nor completes its whole sequence of patterns before $a_2$ completes $\bo(\dip_1, \flp_1, \bop, \bops)$}.

			In view of Algorithm~\ref{bas:bo}, when executing $\bo(\dip_1, \flp_1, \bop, \bops)$, {$a_2$} executes Pattern $\di(\dip_1)$ followed by Pattern $\fl(\dip_1, \flp_1)$ on each node at distance at most $\bop$ from {$v_2$}. Since $\bop \geq \delta$, during its execution of $\bo(\dip_1, \flp_1, \bop, \bops)$, {$a_2$ follows $\pa(v_2, v_1)$}, executes Pattern $\di(\dip_1)$ (denoted by $p_1$) and then Pattern $\fl(\dip_1, \flp_1)$ (denoted by $p_2$) both from node {$v_1$}. In order to prove that the execution of $\bo(\dip_1, \flp_1, \bop, \bops)$ by {$a_2$} pushes the execution of $S$ by {$a_1$}, we are going to prove that the agents meet by the time {$a_2$} completes its executions of $p_1$ and $p_2$.

			By assumption, {$a_1$} has not finished executing $S$ when {$a_2$} arrives on {$v_1$} to execute $p_1$ and $p_2$. Let us consider what it can be executing at this moment. If it is executing Pattern $\di(\dip_2)$ with $\dip_2 \leq \dip_1$ a positive integer, then in view of Lemma~\ref{lem:di:rdv}, the agents meet by the end of the execution of $p_1$, which contradicts the assumption that the agents do not meet {before} the end of $\bo(\dip_1, \flp_1, \bop, \bops)$. This means that when {$a_2$} starts executing $p_1$, {$a_1$} is executing Pattern $\fl(\dip_2, \flp_2)$ for some positive integers $\dip_2$ and $\flp_2$ such that $\dip_2 + \flp_2 \leq \dip_1 + \flp_1$. After $p_1$, {$a_2$} executes {$p_2$}. By Lemma~\ref{lem:fl:rdv}, if {$a_1$} is still executing Pattern $\fl(\dip_2, \flp_2)$ for some positive integers $\dip_2$ and $\flp_2$ such that $\dip_2 + \flp_2 \leq \dip_1 + \flp_1$ (the same as above, or another) then the agents meet by the end of the execution of $p_2$ which contradicts our assumption once again. As a consequence, when {$a_2$} starts executing $p_2$, {$a_1$} is executing Pattern $\di(\dip_3)$ for some positive integer $\dip_3 \leq \dip_1$. Denote by $p_3$ this pattern, and remember that {$a_1$} {starts it after} {$a_2$} starts $p_1$. Moreover, when {$a_2$} starts executing $p_2$, {$a_1$} can not be in {$v_1$} as it is the node where {$a_2$} starts $p_2$, thus it has at least {started the first edge traversal} of $p_3$. Hence, $p_1$ concurrently precedes $p_3$, and {$a_2$ completes the execution of $p_1$ before $a_1$ completes the execution of $p_3$}.

			In view of Algorithm~\ref{bas:di}, like in the proof of Lemma~\ref{lem:di:rdv}, we can denote by $e_1, \ldots, e_n, \overline{e_1, \ldots, e_n}$ the route followed by {$a_1$} when executing $p_3$ and by $e_1, \ldots, e_n, s, \overline{e_1, \ldots, e_n, s}$ the route followed by {$a_2$} when executing $p_1$ where $s$ corresponds to edges traversed at a distance belonging to {$\{\dip_3 + 1; \ldots; \dip_1\}$}. Remark that in view of the definition of a backtrack, $\overline{e_1, \ldots, e_n, s} = \overline{s}, \overline{e_1, \ldots, e_n}$. Consider the moment $t_1$ when {$a_1$} completes the first period of $p_3$ and begins the second one. It has just traversed $e_1, \ldots, e_n$, and is about to follow $\overline{e_1, \ldots, e_n}$. At this moment, {$a_2$} can not have {started the edge traversals $\overline{e_1, \ldots, e_n}$}, or else agents have met by $t_1$, which would contradict our assumption. However, as $p_1$ is completed before $p_3$, {$a_2$} must finish executing {some non-empty part of $s\overline{s}$ followed by $\overline{e_1, \ldots, e_n}$} before {$a_1$} finishes executing $\overline{e_1, \ldots, e_n}$ which implies that {the} agents meet by the end of the execution of $p_1$ and contradicts once again the hypothesis that they do not meet by the end of $p_2$.

			So, in every case, the assumption that {before the end of the execution of $\bo(\dip_1, \flp_1, \bop, \bops)$, {$a_1$} neither meets {$a_2$} nor finishes executing $S$} is contradicted. Hence, the execution of Pattern $\bo(\dip_1, \flp_1, \bop, \bops)$ by {$a_2$} pushes the execution of $S$ by {$a_1$}, and the lemma holds.
		\end{proof}

	\subsection{Agents synchronizations}
\label{sub:sync}

		We recall the reader that $\dist$ is the initial distance separating the two agents in the basic grid.

		The aim of this subsection is to introduce and prove several synchronization properties our algorithms offer (cf., Lemmas~\ref{lemmaFireworks} and~\ref{lemmaBitSynchronization}). By ``synchronization" we mean that if one agent has completed some part of its rendezvous algorithm, then either it must have met the other agent or this other agent has also completed some part (not necessarily the same one) of its algorithm \ie it must have made progress.

		To prove Lemmas~\ref{lemmaFireworks} and~\ref{lemmaBitSynchronization}, we first need to show some
		more technical results---Lemmas~\ref{cor:syn:pp1},~\ref{cor:syn:pp2}, and~\ref{cor:syn:ite}.


		\begin{lemma} \label{cor:syn:pp1}
			Let $v_1$ and $v_2$ be the two nodes separated by a distance $\dist$ that are initially occupied by the agents $a_1$ and $a_2$ respectively. {Let $c_1$ and $d_1$ be two non-negative integers such that $d_1 \geq \dist$. Assume the prefix of the execution of agent $a_1$ is the sequence $S_1 = \ite(1), \dots, \ite(2^{c_1})$. Assume that a part of the execution of agent $a_2$ is the sequence $S_2 = \pp(1, d_1), \dots, \pp(2^{c_1}, d_1)$. Either the agents meet {before} the end of the execution of $S_2$ or $S_1$ finishes before $S_2$}.
		\end{lemma}
		
		\begin{proof}

{Assume by contradiction that there exists some scenario $E_1$ in which neither the agents meet {before} the end of the execution of $S_2$ by agent $a_2$ nor the execution of $S_1$ by $a_1$ finishes before the execution of $S_2$ by $a_2$.}
			
			In view of Algorithm~\ref{patternPP}, {and since there are as many occurrences of Procedure $\ite$ in $S_1$ as of Procedure $\pp$ in $S_2$}, there are as many basic patterns (from $\{\rd; \fl; \bo\}$) in $\mathcal{BD}({S_1})$ as in $\mathcal{BD}({S_2})$. Each basic pattern inside $\mathcal{BD}({S_1})$ and $\mathcal{BD}({S_2})$ is given an index between 1 and $n$ according to its order of appearance. {In view of} Remark~\ref{rem:per}, {for any integer $d_2$, $\mathcal{BD}(\ite(d_2))$} is perfect, {which implies that $\mathcal{BD}(S_1)$ is perfect too.} This has the following consequences. When agent {$a_1$} starts the execution of {$S_1$}, this agent starts the execution of the first basic pattern in $\mathcal{BD}({S_1})$. {Moreover}, when agent {$a_1$} completes the execution of {$S_1$}, it completes the execution of the $n$-th basic pattern in $\mathcal{BD}({S_1})$. Lastly, for any integer $i$ between 1 and $n - 1$, agent {$a_1$} does not make any edge traversal between the $i$-th and the $(i + 1)$-th basic pattern in $\mathcal{BD}({S_1})$. In other words, every edge traversal agent {$a_1$} makes during the execution of {$S_1$} is performed during one of the basic patterns inside $\mathcal{BD}({S_1})$. Remark that $\mathcal{BD}({S_2})$ is {perfect too}.
			
			{Let us show by induction on $i$} that for {every} integer $i$ between 1 and $n$, {$a_1$ either meets $a_2$ or completes the execution of the $i$-th pattern inside $\mathcal{BD}({S_1})$ before $a_2$ completes the execution of the $i$-th pattern inside $\mathcal{BD}({S_2})$}. {If $i = 1$, we distinguish two cases.} {In the first case, the first pattern of $\mathcal{BD}(S_2)$ starts before the first pattern of $\mathcal{BD}(S_1)$, while in the second case it does not \ie, in view of Definition~\ref{def:prec}, the first pattern of $\mathcal{BD}(S_1)$ precedes the first pattern of $\mathcal{BD}(S_2)$.}
			
			{In the first case, since it does not make any edge traversal before the moment $t_1$ when it starts executing the first pattern of $\mathcal{BD}(S_1)$, we assume that $a_1$ is in $v_1$ from the moment $t_2$ when $a_2$ starts executing the first pattern in $\mathcal{BD}(S_2)$ to $t_1$. We can build another {scenario} $E_2$ in which $a_1$ (resp. $a_2$) executes $S_1$ (resp. $S_2$) from $v_1$ (resp. $v_2$) as in $E_1$, at every moment of $E_2$ both $a_1$ and $a_2$ are at the exact same place as in $E_1$, but in which the first pattern of $\mathcal{BD}(S_1)$ precedes the first pattern of $\mathcal{BD}(S_2)$. We achieve this by designing the behavior of the adversary in $E_2$ as follows. The adversary handles $a_2$ in the same way in $E_2$ as in $E_1$. From the moment $t_3$ at which $a_1$ starts executing $S_1$ in $E_2$ to the moment $t_2$ at which $a_2$ starts executing $S_2$ (both in $E_1$ and $E_2$), as well as from $t_2$ to the moment $t_1$ at which $a_1$ starts executing $S_1$ in $E_1$, in $E_2$, the adversary prevents $a_1$ from moving from $v_1$. Moreover, from $t_1$ on, in $E_2$, the adversary handles $a_1$ as in $E_1$. Since at every moment both $a_1$ and $a_2$ are at the same place in $E_1$ and $E_2$, if we prove in $E_2$ that {$a_1$ either meets $a_2$ or completes the execution of the first pattern inside $\mathcal{BD}({S_1})$ before $a_2$ completes the execution of the first pattern inside $\mathcal{BD}({S_2})$}, then this also holds in $E_1$. Also, in $E_2$, the first pattern of $\mathcal{BD}(S_1)$ precedes the first pattern of $\mathcal{BD}(S_2)$, this is the second of the two cases we distinguish.} {Hence, when $i=1$ it is enough to consider the second case only.}
			
			{If the first pattern of $\mathcal{BD}(S_1)$ precedes the first pattern of $\mathcal{BD}(S_2)$, then} in view of Lemmas~\ref{lem:rd:fl},~\ref{lem:rd:bo} and~\ref{lem:fl:di}, Algorithm~\ref{patternPP} and the fact that $d_1 \geq \dist$, whatever the type of the {first} pattern inside $\mathcal{BD}({S_1})$ ($\fl$, $\bo$ or $\rd$), {$a_1$ either meets $a_2$ or completes the first pattern inside $\mathcal{BD}(S_1)$ before $a_2$ completes the first pattern inside $\mathcal{BD}(S_2)$.}
			
			{Let us now assume that there exists an integer $j$ {in $\{1,\ldots,(n - 1)\}$} {such that $a_1$ either meets $a_2$ or completes the $j$-th pattern inside $\mathcal{BD}(S_1)$ before $a_2$ completes the $j$-th pattern inside $\mathcal{BD}(S_2)$} and show that {$a_1$ either meets $a_2$ or completes the $(j + 1)$-th pattern inside $\mathcal{BD}(S_1)$ before $a_2$ completes the $(j + 1)$-th pattern inside $\mathcal{BD}(S_2)$}. In order to achieve this, let us suppose that the agents do not meet {before} the end of the execution of the $(j + 1)$-th pattern inside $\mathcal{BD}(S_2)$ and show that the execution of the $(j + 1)$-th pattern inside $\mathcal{BD}(S_1)$ finishes before the execution of the $(j + 1)$-th pattern inside $\mathcal{BD}(S_2)$. In view of the induction hypothesis, and the assumption that the agents do not meet {before} the end of the execution of the $(j + 1)$-th pattern inside $\mathcal{BD}(S_2)$, the $j$-th pattern inside $\mathcal{BD}(S_1)$ finishes before the $j$-th pattern inside $\mathcal{BD}(S_2)$ which means that the $(j + 1)$-th pattern inside $\mathcal{BD}(S_1)$ precedes the $(j + 1)$-th pattern inside $\mathcal{BD}(S_2)$. Again, in view of Lemmas~\ref{lem:rd:fl},~\ref{lem:rd:bo} and~\ref{lem:fl:di}, Algorithm~\ref{patternPP} and the fact that $d_1 \geq \dist$, whatever the type of the $(j + 1)$-th pattern inside $\mathcal{BD}(S_1)$, it finishes before the $(j + 1)$-th pattern inside $\mathcal{BD}(S_2)$.} In particular, if the {$(j + 1)$}-th pattern inside $\mathcal{BD}({S_1})$ is a $\fl$ or a $\bo$ called after the test at line~\ref{readingTest}, at line~\ref{callFL} or~\ref{callBO}, {of Algorithm~\ref{iteration}}, {regardless of which of the two patterns it is, $a_1$ completes its execution {before} the end} of the {$(j + 1)$}-th pattern inside $\mathcal{BD}({S_2})$. Indeed, for any positive integers $\dip$, {$\flp$, $\bop$} and $\bops$, $\bo(\dip, {\flp, \bop}, \bops)$ can be viewed as composed of several $\fl(\dip, {\flp})$ so that $C(\bo(\dip, {\flp, \bop}, \bops)) \geq C(\fl(\dip, {\flp}))$.
			
			{This means in particular that {before} the end of the $n$-th pattern inside $\mathcal{BD}(S_2)$ and thus {before} the end of $S_2$, $a_1$ either meets $a_2$ or completes the $n$-th pattern inside $\mathcal{BD}(S_1)$ and thus $S_1$ itself, which completes the proof.}
		\end{proof}

		\begin{lemma} \label{cor:syn:pp2}
			Let {$d_1$ and $\dip_1$ be some integers} such that the first parameter of each basic pattern inside $\mathcal{BD}(\ite(d_1))$ is assigned a value which is at most $\dip_1$. For {every integer $d_2 \geq d_1$}, the first parameter of each basic pattern inside $\mathcal{BD}(\pp(d_1, d_2))$ is {less than} or equal to $\dip_1 + 3d_2$.
		\end{lemma}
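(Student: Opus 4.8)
The plan is to unfold the definition of Procedure~$\pp$ (Algorithm~\ref{patternPP}) and read off, call by call, the first parameter of every basic pattern it produces, comparing it against the target bound $\dip + 3d_2$. The two branches of the loop of $\pp(d_1, d_2)$ only ever issue calls to $\fl$ and to $\rd$, and both $\fl$ and $\rd$ are themselves basic patterns; hence by Definition~\ref{def:decompo} the decomposition $\mathcal{BD}(\pp(d_1, d_2))$ is exactly the sequence of these issued $\fl$ and $\rd$ calls, one per element $p$ of $\mathcal{BD}(\ite(d_1))$. So it suffices to bound the first parameter of each such issued call in terms of the parameters of the corresponding $p$.

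The key preliminary fact I would establish first is that \emph{every} $\fl$ or $\bo$ basic pattern occurring in $\mathcal{BD}(\ite(d_1))$ has its second parameter equal to $d_1$. This is a purely syntactic check obtained by tracing the three algorithms that contribute basic patterns to $\mathcal{BD}(\ite(d_1))$: the loop body of $\ite(d_1)$ issues $\fl(\cdot, d_1)$ and $\bo(\cdot, d_1, d_1, j)$, both with second parameter $d_1$; the call $\p(d_1)$ contributes, through its own decomposition, the pattern $\bo(\rho(d_1), d_1, d_1, 0)$ (second parameter $d_1$) together with the decompositions of the $\pp(i, d_1)$ for powers of two $i < d_1$; and each such $\pp(i, d_1)$ issues only $\fl(\cdot, d_1)$ and $\rd$ patterns, the former again having second parameter $d_1$. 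All remaining basic patterns in $\mathcal{BD}(\ite(d_1))$ are of type $\rd$, hence irrelevant to this claim. Thus the second parameter of any $\fl$ or $\bo$ in $\mathcal{BD}(\ite(d_1))$ is always exactly $d_1$.

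With that in hand the bound follows by a two-case inspection of the loop of $\pp(d_1, d_2)$, where $p$ ranges over $\mathcal{BD}(\ite(d_1))$. If $p$ is an $\rd$ with first parameter $\dip_p$, then $\pp$ issues $\fl(\dip_p, d_2)$, whose first parameter $\dip_p$ satisfies $\dip_p \le \dip \le \dip + 3d_2$ by the hypothesis on $\dip$. If instead $p$ is an $\fl$ or a $\bo$, write $\dip_p$ and $\flp_p$ for its first two parameters; then $\pp$ issues $\rd(d_2 + \dip_p + 2\flp_p, \cdot)$, whose first parameter is $d_2 + \dip_p + 2\flp_p$. Using $\dip_p \le \dip$ (hypothesis) and $\flp_p = d_1 \le d_2$ (the preliminary fact together with $d_2 \ge d_1$), this is at most $d_2 + \dip + 2d_2 = \dip + 3d_2$. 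Since these two cases exhaust $\mathcal{BD}(\pp(d_1, d_2))$, every first parameter there is at most $\dip + 3d_2$, as claimed.

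I expect the only real obstacle to be the bookkeeping behind the preliminary fact: one must argue that $\mathcal{BD}(\ite(d_1))$ is assembled recursively through $\p$ and $\pp$, yet none of these nested layers ever introduces an $\fl$ or $\bo$ whose second parameter departs from $d_1$. This is where Remark~\ref{rem:per} (perfectness of $\mathcal{BD}(\ite(d_1))$) and the explicit forms of Algorithms~\ref{iteration},~\ref{Pprocedure}, and~\ref{patternPP} are used. The fact is genuinely needed rather than optional: bounding $2\flp_p$ by $2\dip$ would only yield $d_2 + 3\dip$, which need not be below $\dip + 3d_2$ since $\dip$ (e.g. a value of the form $r(d_1)$) can far exceed $d_2$; it is precisely the identity $\flp_p = d_1$ that saves the argument. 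The closing arithmetic, by contrast, is immediate and hinges only on $d_2 \ge d_1$.
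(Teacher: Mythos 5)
Your proposal is correct and takes essentially the same route as the paper's proof: both rest on the preliminary syntactic fact that every $\fl$ or $\bo$ inside $\mathcal{BD}(\ite(d_1))$ has second (and, for $\bo$, third) parameter exactly $d_1$, and then bound, via the two branches of Algorithm~\ref{patternPP}, the first parameter of the pattern issued for each element of $\mathcal{BD}(\ite(d_1))$. The paper merely frames this as a proof by contradiction and splits your second case into separate $\fl$ and $\bo$ subcases; the substance is identical.
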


		\begin{proof}
			
			In view of Algorithm~\ref{patternPP}, each basic pattern inside $\mathcal{BD}(\ite(d_1))$ and $\mathcal{BD}(\pp(d_1, d_2))$ (with $d_2 \geq d_1$ {some integer}) is given an index {from 1 to $n$} according to its order of appearance, with $n$ the number of basic patterns in either of these decompositions. Thus, for any integer $i$ {from 1 to $n$}, there is a pair of patterns $(p_1, p_2)$ such that $p_1$ is the $i$-th basic pattern inside $\mathcal{BD}(\ite(d_1))$, and $p_2$ is the $i$-th pattern inside $\mathcal{BD}(\pp(d_1, d_2))$. Let us show that there is no such pair $(p_1, p_2)$ such that the first parameter of $p_2$ is given a value greater than $\dip_1 + 3d_2$. To this end, we analyse three cases depending on the type of pattern $p_1$.

			Let us first consider the case in which $p_1$ is Pattern $\rd(\dip_2, n_1)$ with $\dip_2 \leq \dip_1$ and $n_1$ two positive integers. In view of Algorithm~\ref{patternPP}, since $p_1$ is Pattern $\rd(\dip_2, n_1)$, $p_2$ is $\fl(\dip_2, d_2)$, which means that its first parameter is at most $\dip_1$ and thus at most $\dip_1 + 3d_2$.

			Let us now consider the cases in which $p_1$ is either Pattern $\fl$ or Pattern $\bo$. We first make the following remark. In $\mathcal{BD}(\ite(d_1))$, whether it is called directly by Procedure~$\ite(d_1)$, or inside its call to $\p(d_1)$, or inside the	call of the latter to $\pp(d_3, d_1)$ with some {integer} $d_3 < d_1$, the second parameter of Pattern $\fl$ is always $d_1$, and the second and third parameters of Pattern $\bo$ are always $d_1$ as well.
			
			{In view of Algorithm~\ref{patternPP}, whether $p_1$ is Pattern $\fl(\dip_2, d_1)$ or Pattern $\bo(\dip_2, d_1, d_1, \bops)$ with two positive integers $\bops$ and $\dip_2 \leq \dip_1$, $p_2$ is $\rd(d_2 + \dip_2 + 2d_1, C(\bo(\dip_2, d_1, d_1, \bops)))$.} Its first parameter is $d_2 + \dip_2 + 2d_1$ which is at most $\dip_1 + 3d_2$.
			
			

			Hence, within $\mathcal{BD}(\pp(d_1, d_2))$, there cannot be any call to a basic pattern in which the first parameter is assigned a value greater than $\dip_1 + 3d_2$, which proves the lemma.
		\end{proof}

		\begin{lemma} \label{cor:syn:ite}
			The first parameter of each basic pattern inside $\mathcal{BD}(\ite(d_1))$ (with $d_1$ any power of two) is at most {$32d_1^4 - 6d_1$.}
		\end{lemma}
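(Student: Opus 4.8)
The plan is to proceed by strong induction on the power of two $d_1$, exploiting the recursive way in which $\mathcal{BD}(\ite(d_1))$ is built from $\mathcal{BD}(\ite(i))$ for smaller powers of two $i$. The basic patterns appearing in $\mathcal{BD}(\ite(d_1))$ come from two sources: the \emph{first part}, namely the call to $\p(d_1)$ at line~\ref{MainFireworks'Call}, and the \emph{second part} (lines~2--19), in which the bits are processed. I would bound the first parameters arising from each source separately and then take the maximum over the two.

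For the second part, the first parameter of every basic pattern is the value of the local variable $radius$ at the moment of the call. This variable is initialized to $r(d_1)$ (line~2) and incremented by $3d_1$ exactly once per step (line~14), and the total number of steps is $K = d_1(2d_1(d_1+1)+1) = 2d_1^2(d_1+1)+d_1$. Hence the largest first parameter in this part is reached by the very last call to $\rd$ and equals $r(d_1) + 3d_1 K$. The key computation is then to verify, directly from the definition of $\rho$, that
\[
  r(d_1) + 3d_1\bigl(2d_1^2(d_1+1)+d_1\bigr) = \rho(2d_1) - 3d_1 ,
\]
which follows by expanding $\rho(2d_1) = r(d_1) + 3d_1\bigl(d_1(2d_1(d_1+1)+1)+1\bigr)$. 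So the second part contributes a maximum first parameter that is exactly $\rho(2d_1)-3d_1$, and the bound is tight there.

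For the first part $\p(d_1)$, the two direct calls $\bo(\rho(d_1),d_1,d_1,0)$ and $\rd(r(d_1),\ldots)$ contribute first parameters $\rho(d_1)$ and $r(d_1)=\rho(d_1)+3d_1$; the remaining basic patterns come from the calls $\pp(i,d_1)$ for powers of two $i<d_1$. Here I would invoke the induction hypothesis, which gives that every first parameter in $\mathcal{BD}(\ite(i))$ is at most $\rho(2i)-3i$, and then apply Lemma~\ref{cor:syn:pp2} with $\dip = \rho(2i)-3i$ and $d_2 = d_1$ to deduce that every first parameter in $\mathcal{BD}(\pp(i,d_1))$ is at most $\rho(2i)-3i+3d_1$. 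To close the argument I need the monotonicity of $\rho$ along the powers of two (which follows immediately from the recurrence, since $\rho(i) = r(i/2) + (\text{positive term}) > \rho(i/2)$), together with $2i \le d_1$ for every power of two $i<d_1$; these yield $\rho(2i)\le\rho(d_1)$, so each such first parameter is at most $\rho(d_1)+3d_1$. It then remains to check the elementary inequality $\rho(d_1)+3d_1 \le \rho(2d_1)-3d_1$, equivalently $\rho(2d_1)-\rho(d_1) \ge 6d_1$, which holds because $\rho(2d_1)-\rho(d_1) = 3d_1\bigl(2d_1^2(d_1+1)+d_1+2\bigr) \ge 6d_1$.

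The base case $d_1=1$ is handled directly: the loop of $\p(1)$ produces no $\pp$ call, so one only checks the two direct calls (first parameters $\rho(1)=1$ and $r(1)=4$) and the second part (maximal first parameter $r(1)+3K = 19 = \rho(2)-3$), all at most $\rho(2)-3$. The main obstacle is the bookkeeping in the second part: one must count the steps and the increments of $radius$ precisely and then match the resulting closed form against the somewhat opaque definition of $\rho(2d_1)$; getting this arithmetic to line up \emph{exactly} (rather than merely up to a constant) is the delicate point, since it makes the stated bound tight. By contrast, the first-part estimates reduce cleanly to Lemma~\ref{cor:syn:pp2}, the monotonicity of $\rho$, and the above elementary inequality.
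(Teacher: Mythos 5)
Your proposal is correct and follows essentially the same route as the paper's proof: induction on the power of two, with the $\pp(i,d_1)$ calls inside $\p(d_1)$ handled via Lemma~\ref{cor:syn:pp2} applied to the induction hypothesis, the two direct calls $\bo(\rho(d_1),d_1,d_1,0)$ and $\rd(r(d_1),\cdot)$ checked by hand, and the bit-processing part bounded by tracking the $radius$ variable through its $d_1(2d_1(d_1+1)+1)$ increments of $3d_1$, matching $r(d_1)+3d_1^2(2d_1(d_1+1)+1)=\rho(2d_1)-3d_1$ exactly. Your write-up is if anything cleaner than the paper's (a direct bound with explicit monotonicity of $\rho$, rather than the paper's argument by contradiction, which also contains a small notational slip in stating the induction hypothesis), but the decomposition and key computations are the same.
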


		\begin{proof}
			We prove this lemma by induction on $d_1$.

			{Let us first consider that $d_1 = 1$. We enumerate the basic patterns inside $\mathcal{BD}(\ite(1))$ and show that for each of them the first parameter is given a value which is {less} than or equal to $32d_1^4 - 6d_1 = 26$. Procedure $\ite(1)$ begins with $\p(1)$ which is composed of calls to $\bo(2, 1, 1, 0)$ and $\rd(5, C(\bo(2, 1, 1, 0)))$, with both first parameters lower than $26$. After $\p(1)$ too, the first parameter that is given to the patterns called in Procedure $\ite(1)$ is always at most $26$. Indeed, the first parameter is assigned its maximum value when $j = 2d_1 (d_1 + 1) = 4$ and $i = d_1 = 1$ \ie when $3d_1 = 3$ has been added $i(j + 1) = 5$ times to the initial value of $radius$ \ie $5$, which gives a maxiuml value equal to $5 + 15 = 20 < 26$.} This concludes the analysis of the case when $d_1 = 1$.

			 Let us now assume that there exists a power of two $d_2$ such that for each power of two $d_3\leq d_2$, the first parameter of each basic pattern inside $\mathcal{BD}(\ite(d_3))$ is at most {$32d_2^4 - 6d_2$. We once again enumerate the basic patterns inside $\mathcal{BD}(\ite(2d_2))$ and show that each of them is given a value for the first parameter which is at most $512d_2^4 - 12d_2$.} Procedure $\ite(2d_2)$ begins with $\p(2d_2)$ which in turn, begins with $\pp(1, 2d_2)$, \ldots, $\pp(d_2, 2d_2)$. By induction hypothesis, inside $\mathcal{BD}(\ite(1))$, \ldots, $\mathcal{BD}(\ite(d_2))$, the first parameter of each basic pattern is at most {$32d_2^4 - 6d_2$.} In view of Lemma~\ref{cor:syn:pp2}, inside $\mathcal{BD}(\pp(1, 2d_2))$, \ldots, $\mathcal{BD}(\pp(d_2, 2d_2))$, the first parameter of each basic pattern is at most {$32d_2^4 - 6d_2 + 6d_2 = 32d_2^4 < 512d_2^4 - 12d_2$.} Moreover, after $\pp(1, 2d_2)$, \ldots, $\pp(d_2, 2d_2)$, Procedure~$\p(2d_2)$ calls {Pattern $\bo(32d_2^4, 2d_2, 2d_2, 0)$} followed by {Pattern $\rd(32d_2^4+6d_2,C(\bo(32d_2^4, 2d_2, 2d_2, 0)))$.} Inside these calls, the first parameter is respectively given the values {$32d_2^4$ and $32d_2^4+6d_2$ which are both lower than $512d_2^4 - 12d_2$}. {Moreover}, after $\p(2d_2)$, in the same way as when $d_1 = 1$, we can show that the first parameter keeps increasing and reaches a maximum value equal to {$32d_2^4 + 6d_2 + 12d_2^2(4d_2 (2d_2 + 1) + 1) = 128d_2^4 + 48d_2^3 + 12d_2^2 + 6d_2 < 512d_2^4 - 12d_2$} which completes the proof of the lemma.
		\end{proof}

		Before presenting the next lemma, we need to introduce the following notions. We say that the first four lines of Algorithm $\p$ are its first part, and that the last line is the second part. Procedure $\ite$ begins with a call to Procedure $\p$: We will consider that the first part of Procedure $\ite$ is the first part of this call, and that the second part of Procedure $\ite$ is the second part of this call. After these two parts, there is a third part in Procedure $\ite$ which consists of calls to basic patterns. Moreover, note that the execution of Algorithm \aud{} can be viewed as a sequence of consecutive calls to Procedure $\ite$ with an increasing parameter. We will say that the $(i + 1)$-th call to Procedure $\ite$ {(\ie the call to Procedure $\ite(2 ^ i)$)} by an agent executing Algorithm \aud{} is Phase~$i$.

		\begin{lemma} \label{lemmaFireworks}
			Consider two agents {$a_1$ and $a_2$} executing Algorithm \aud{}. Let {$i_1$ and $d_1$ be two integers such that $2 ^ i_1 = d_1 \geq \dist$}. {Agent $a_1$ either meets $a_2$ or completes the execution of the first part of Phase~$i_1$ before agent $a_2$ completes the execution of the second part of Phase~$i$.}
		\end{lemma}

		\begin{proof}
			
			Assume by contradiction that {the lemma is false}. {This implies in particular that when $a_2$ finishes executing the second part of Phase~$i_1$, $a_1$ is either executing Phase~$i_2$ for an integer $i_2 < i_1$, or the first part of Phase~$i_1$}.

			First of all, in view of Lemma~\ref{cor:syn:pp1} and since $d_1 \geq \dist$, we know that {$a_1$ either meets $a_2$ or fini\-shes executing the sequence $\ite(1)$, \ldots, $\ite(2 ^ {i_1 - 1})$ before $a_2$ completes the sequence $\pp(1, d_1)$, \ldots, $\pp(2 ^ {i_1 - 1}, d_1)$ (\ie the loop at the beginning of procedure $\p(d_1)$).} Given that by assumption, agents do not meet before {$a_2$} completes its execution of the second part of Phase~$i_1$, {$a_1$ starts executing the first part of Phase~$i_1$ before $a_2$ finishes executing the loop at the beginning of Procedure $\p(d_1)$}, {which means that the execution of the loop at the beginning of Procedure $\p(d_1)$ by $a_1$ precedes the execution of $\bo(2d_1^4, d_1, d_1, 0)$ by $a_2$.}

			Let us {build on this to} show that when {$a_2$} finishes executing $\bo(2d_1^4, d_1, d_1, 0)$, {$a_1$} has finished executing the loop at the beginning of Procedure $\p(d_1)$. In view of Lemmas~\ref{cor:syn:pp2} and~\ref{cor:syn:ite}, while executing this loop, {$a_1$} executes a sequence of Patterns $\rd$ and $\fl$ called by Procedure $\pp$ whose the first parameter is at most $2d_1^4$. Since $d_1 \geq \dist$, in view of Lemma~\ref{lem:bo} {and the assumption that the agents do not meet {before} the end of the execution of the second part of Phase~$i_1$ by $a_2$}, when {$a_2$} finishes executing $\bo(2d_1^4, d_1, d_1, 0)$, {$a_1$} has finished executing the loop.

			After executing {Pattern~$\bo(2d_1^4, d_1, d_1, 0)$} but before completing Procedure~$\p(d_1)$, {$a_2$} performs {$\rd(2d_1^4+3d_1, C(\bo(2d_1^4, d_1, d_1, 0)))$}. {In view of the previous paragraph, the execution of this pattern by $a_2$ is preceded by the execution of $\bo(2d_1^4, d_1, d_1, 0)$ by $a_1$.} When {$a_2$} finishes executing $\rd(2d_1^4+3d_1, C(\bo(2d_1^4, d_1, d_1, 0)))$, in view of Lemma~\ref{lem:rd:bo} and since by assumption the agents have not met, {$a_1$} has finished executing Pattern~$\bo(2d_1^4, d_1, d_1, 0)$. This means that when {$a_2$} finishes executing $\p(d_1)$ and thus the second part of Phase~$i_1$, {$a_1$} has completed the execution of the first part of Phase~$i_1$, which proves the lemma.
		\end{proof}

		In the following lemma, we focus on the calls to Pattern $\rd$ in the second and in the third part of Procedure $\ite(d_1)$ for any power of two $d_1$. In the statement and proof of this lemma, they are called ``synchronization $\rd$'', and indexed from 1 to $(d_1 (2d_1 (d_1 + 1) + 1) + 1)$ in their ascending execution order in these two parts of the procedure. During any execution of Procedure $\ite(d_1)$ for any power of two $d_1$, the call to Pattern $\rd$ in the second part of Procedure $\ite$ is the first (indexed by 1) synchronization $\rd$ of this procedure.

		\begin{lemma} \label{lemmaBitSynchronization}
			Let $a_1$ and $a_2$ be two agents executing Algorithm \aud{}. Let $v_1$ and $v_2$ be their respective initial nodes separated by a distance $\dist$. For any power of two $d_1 \geq \dist$ and any positive integer $i \leq d_1 (2d_1 (d_1 + 1) + 1) + 1$, if the agents have not met yet, then when any of them completes the execution of the $i$-th synchronization $\rd$ of $\ite(d_1)$, the other agent has at least started it.
		\end{lemma}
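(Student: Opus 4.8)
The plan is to prove the statement by induction on $i$, exploiting the fact that inside the second and third parts of $\ite(d_1)$ the synchronization patterns and the bit‑processing patterns strictly alternate. Writing $\rd_k$ for the $k$‑th synchronization $\rd$ and $S_k$ for the bit‑processing pattern executed between $\rd_k$ and $\rd_{k+1}$, both agents run the very same skeleton $\rd_1,S_1,\rd_2,S_2,\dots$: the \emph{type} of $S_k$ (a $\fl$ for bit $0$, a $\bo$ for bit $1$) may differ between the two agents, but the loop indices, and hence the ``radius'' first parameter, are identical deterministic functions of $d_1$. I denote by $R_k$ this common first parameter, so that $R_1=r(d_1)$, $R_{k+1}=R_k+3d_1$, with $\rd_{k+1}=\rd(R_k+3d_1,\,C(\bo(R_k,d_1,d_1,j_k)))$ and $S_k$ equal to $\fl(R_k,d_1)$ or $\bo(R_k,d_1,d_1,j_k)$. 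For the base case $i=1$, note that $\rd_1$ is exactly the last line of $\p(d_1)$, i.e.\ the second part of Phase~$\log_2 d_1$; since $d_1\geq\dist$, Lemma~\ref{lemmaFireworks} guarantees that when one agent finishes $\rd_1$ the other has finished the first part of that phase, which is precisely the instruction preceding $\rd_1$, so that agent has at least started $\rd_1$.

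For the inductive step I assume the property at index $i$, let $a_1$ be the first agent to finish its $(i+1)$‑th synchronization $\rd$ at some time $t$, and show that $a_2$ has started its own $\rd_{i+1}$ by $t$ (the reverse implication is then trivial). Suppose not; then $a_2$ has not finished $S_i$ by $t$. Let $t_0<t_1<t$ be the instants at which $a_1$ finishes $\rd_i$ and finishes $S_i$ (equivalently starts $\rd_{i+1}$). The crux is to prove that $a_2$'s copy of $S_i$ \emph{concurrently precedes} $a_1$'s copy of $\rd_{i+1}$. The overlap half is immediate from the contradiction hypothesis ($a_2$ has not finished $S_i$ by $t>t_1$). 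For the precedence half I combine the induction hypothesis with an \emph{auxiliary push in the opposite direction}: by the hypothesis at index $i$, $a_2$ has started $\rd_i$ by $t_0$; if it had already finished $\rd_i$ before $t_0$, it started $S_i$ before $t_0<t_1$ and we are done; otherwise $a_2$'s $\rd_i$ concurrently precedes $a_1$'s $S_i$ (which runs over $[t_0,t_1]$), and since $S_i$ has the same first parameter $R_i$ as $\rd_i$ and $d_1\geq\dist$, Lemma~\ref{lem:fl:di} (if $S_i$ is a $\fl$) or Lemma~\ref{lem:bo} applied to the singleton sequence $\{\rd_i\}$ (if $S_i$ is a $\bo$) forces $a_2$ to finish $\rd_i$ before $t_1$. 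Either way $a_2$ starts $S_i$ before $t_1$, establishing precedence.

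Having placed $a_2$'s $S_i$ concurrently before $a_1$'s $\rd_{i+1}$, I close with the forward push: $\rd_{i+1}=\rd(R_i+3d_1,\,C(\bo(R_i,d_1,d_1,j_i)))$ pushes $S_i$. If $S_i=\fl(R_i,d_1)$, Lemma~\ref{lem:rd:fl} applies because $R_i+3d_1\geq R_i+d_1+\dist$ and $C(\bo(R_i,d_1,d_1,j_i))\geq C(\fl(R_i,d_1))$; if $S_i=\bo(R_i,d_1,d_1,j_i)$, Lemma~\ref{lem:rd:bo} applies because $R_i+3d_1\geq R_i+2d_1+\dist$ and the two counts coincide. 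In both cases every parameter inequality reduces to $d_1\geq\dist$, which holds by hypothesis. By Definition~\ref{def:push}, since the agents have not met, $a_2$ must finish $S_i$ strictly before $a_1$ finishes $\rd_{i+1}$, i.e.\ $a_2$ starts $\rd_{i+1}$ before $t$, contradicting the assumption and completing the induction.

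I expect the precedence half of the concurrency argument to be the main obstacle: the induction hypothesis alone only yields that $a_2$ has \emph{started} $\rd_i$ by $t_0$, which is too weak to force the start of $a_2$'s $S_i$ before $a_1$ starts $\rd_{i+1}$. The resolution is genuinely the two‑sidedness of the mechanism, namely the opposite‑direction push of $\rd_i$ by $a_1$'s own $S_i$ (using that both $\fl$ and $\bo$ can in turn push $\rd$), which is exactly what pins the two agents to within one synchronization $\rd$ of each other. The remaining work is routine bookkeeping: verifying that the two agents carry the identical radius $R_i$ at a given synchronization index, that their $S_i$'s may be of different types (so all four push lemmas are genuinely needed), and handling the degenerate boundary case in which two compared patterns begin at the exact same instant, which is absorbed by the meeting Lemmas~\ref{lem:di:rdv} and~\ref{lem:fl:rdv}.
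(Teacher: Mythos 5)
Your proof is correct and follows essentially the same route as the paper's: induction on the synchronization index with the base case settled by Lemma~\ref{lemmaFireworks}, and an inductive step that first uses the opposite-direction pushes (Lemmas~\ref{lem:fl:di} and~\ref{lem:bo}) to force the other agent out of its $i$-th synchronization $\rd$, then the forward pushes (Lemmas~\ref{lem:rd:fl} and~\ref{lem:rd:bo}) to force completion of its bit-processing pattern before the $(i+1)$-th synchronization $\rd$ ends. Your explicit verification of the parameter inequalities (the common radius $R_i$, the $+3d_1$ slack, and $C(\bo)\geq C(\fl)$) is bookkeeping the paper leaves largely implicit, but the argument is the same.
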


		\begin{proof}

			Suppose that agent {$a_2$} has just finished executing the $i$-th synchronization $\rd$ inside Procedure $\ite(d_1)$ for any power of two $d_1 \geq \dist$ and any positive integer $i \leq d_1 (2d_1 (d_1 + 1) + 1) + 1$. Let us prove by induction on $i$ that if rendezvous has not occurred yet then {$a_1$} has at least started executing this $i$-th synchronization $\rd$.

			Let us first consider the case in which $i = 1$. The synchronization $\rd$ {$a_2$} has just
			finished executing is called at the end of the execution of Procedure $\p(d_1)$ called at
			line~\ref{MainFireworks'Call} of Procedure~$\ite(d_1)$. Since $d_1 \geq \dist$, in view of
			Lemma~\ref{lemmaFireworks}, when {$a_2$} completes the execution of the first synchronization $\rd$ and thus the execution of $\p(d_1)$, either the agents have met or {$a_1$} has completed the execution of the first part of Procedure $\ite(d_1)$ \ie begun the execution of the first synchronization $\rd$.

			Let us now make the assumption that for any power of two $d_1 \geq \dist$, during any execution of Procedure $\ite(d_1)$, there exists an integer $j$ {from 1 to $d_1 (2d_1 (d_1 + 1) + 1) + 1$} such that when agent {$a_2$} completes the execution of the $j$-th synchronization $\rd$, either the agents have met or {$a_1$} has at least started the execution of the $j$-th synchronization $\rd$, and prove that when {$a_2$} completes the execution of the $(j + 1)$-th synchronization $\rd$, either the agents have met or {$a_1$} has at least started the execution of the same synchronization $\rd$. Let us assume by contradiction that when {$a_2$ finishes} executing the $(j + 1)$-th synchronization $\rd$, {$a_1$} has neither met {$a_2$} nor started executing the $(j + 1)$-th synchronization $\rd$.

			After executing the $j$-th synchronization $\rd$, {$a_2$} executes line~\ref{callFL} or line~\ref{callBO} of Algorithm $\ite(d_1)$ and thus either Pattern $\fl$ or Pattern $\bo$, depending on the bits of its transformed label. {The induction hypothesis implies that the execution of the $j$-th synchronization $\rd$ by $a_1$ precedes the execution by $a_2$ of either $\fl$ or $\bo$ between the $j$-th and the $(j + 1)$-th synchronisation $\rd$}. In view of Lemmas~\ref{lem:fl:di} and~\ref{lem:bo}, as $d_1 \geq D$, whichever pattern {$a_2$} executes, it pushes the execution of the $j$-th synchronization $\rd$ by {$a_1$}. By assumption, when {$a_2$} finishes executing line~\ref{callFL} or line~\ref{callBO} of Algorithm $\ite(d_1)$ after the $j$-th synchronization $\rd$, the agents have not met which implies that {$a_1$} has finished executing the $j$-th synchronization $\rd$.

			The next pattern that {$a_2$} executes is the $(j + 1)$-th synchronization $\rd$. Given the above assumptions and statements, when {$a_2$} starts executing this synchronization $\rd$, {$a_1$} has finished executing the $j$-th synchronization $\rd$ and has started executing line~\ref{callFL} or line~\ref{callBO} of Algorithm $\ite(d_1)$. In view of Lemmas~\ref{lem:rd:fl} and~\ref{lem:rd:bo}, since $d_1 \geq \dist$, whichever pattern {$a_1$} executes, it is pushed by the execution of the $(j + 1)$-th synchronization $\rd$ by {$a_2$}. Given that, still by assumption, the agents do not meet before {$a_2$} completes the execution of the $(j + 1)$-th synchronization $\rd$, when this occurs, {$a_1$} has completed the execution of line~\ref{callFL} or~\ref{callBO} of Algorithm $\ite(d_1)$, just after the $j$-th, and just before the $(j + 1)$-th synchronization $\rd$. Hence, when {$a_2$} completes the execution of the $(j + 1)$-th synchronization $\rd$, {$a_1$} has at least started executing the $(j + 1)$-th synchronization $\rd$, which contradicts the hypothesis that when {$a_2$} completes the execution of the $(j + 1)$-th synchronization $\rd$, {$a_1$} has neither met {$a_2$} nor started executing the $(j + 1)$-th synchronization $\rd$, and proves the lemma.
		\end{proof}

	\subsection{Correctness of Algorithm \aud{}}

		\begin{theorem} \label{lemma6Cases}
			Algorithm \aud{} solves the problem of rendezvous in the basic grid.
		\end{theorem}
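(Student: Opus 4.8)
The plan is to argue by contradiction: suppose the two agents, starting from nodes $u$ and $v$ at Manhattan distance $\dist$, never meet while executing $\aud$. Since every basic pattern ($\di$, $\rd$, $\fl$, $\bo$) ends with a full backtrack and hence induces a net-zero displacement, each agent is anchored at its initial node at the start of every basic pattern, and both agents execute $\ite(1),\ite(2),\ite(4),\dots$ forever. Let $\beta$ be the smallest index at which the two transformed labels differ (this exists by Remark~\ref{alg:rem}), and let $\alpha$ be the smallest power of two with $\alpha\geq\max(\dist,\beta)$. I would show that rendezvous is forced by the time either agent finishes processing its $\beta$-th bit inside $\ite(\alpha)$, contradicting the assumption. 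Here $\alpha\geq\beta$ guarantees that bit $\beta$ is really processed in $\ite(\alpha)$ (whose loop runs over bits $1,\dots,\alpha$), while $\alpha\geq\dist$ is exactly the hypothesis demanded by the synchronization lemmas.

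First I would install both synchronizations. Because $\alpha\geq\dist$, Lemma~\ref{lemmaBitSynchronization} applies to $\ite(\alpha)$: whenever one agent finishes its $i$-th synchronization $\rd$, the other has at least started it; since its base case rests on Lemma~\ref{lemmaFireworks}, this also certifies that both agents actually reach and run $\ite(\alpha)$ concurrently rather than one racing a whole phase ahead. As the synchronization $\rd$'s separate consecutive steps of the bit processing, this yields step-level alignment: when one agent starts step $s$ of a given bit, the other has completed step $s-1$ (or the last step of the previous bit). Consequently, for every step the $\fl$/$\bo$ execution of one agent is concurrent with that of the other.

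Next I would zoom in on bit $\beta$. One agent, say $a_1$, reads a $0$ there and runs $\fl(\cdot,\alpha)$ from $u$ (first argument the current radius), while $a_2$ reads a $1$ and runs $\bo(\cdot,\alpha,\alpha,s)$ from $v$ at every step $s\in\{0,\dots,2\alpha(\alpha+1)\}$. Since $\dist\leq\alpha=\bop$, node $u$ belongs to the list $U$ of nodes that $\bo$ sweeps from $v$, so each such $\bo$ contains an execution of $\di$ followed by a $\fl(\cdot,\alpha)$ \emph{anchored at $u$}, with the same parameters as $a_1$'s pattern. The offset $\bops=s$ shifts the iteration of $\bo$ at which this $u$-anchored sub-$\fl$ occurs, and as $s$ ranges over the $2\alpha(\alpha+1)+1=|U|$ steps this position cycles through every iteration of $\bo$. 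The target is a step at which $a_2$'s $u$-anchored $\fl$ overlaps in time $a_1$'s $\fl$ from $u$; being two $\fl$ patterns from the common node $u$, Lemma~\ref{lem:fl:rdv} then forces a meeting.

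The hard part will be producing such a step, and this is where I expect the main obstacle. Step alignment only forces the two agents' step-$s$ patterns to overlap, not their $u$-visits, so within a single step the adversary can slide $a_2$'s $u$-anchored $\fl$ entirely before $a_1$'s $\fl$-window (``$a_2$ ahead'') or entirely after it (``$a_2$ behind''). The intended resolution exploits the cyclic offset: at the two consecutive steps where the $u$-visit is respectively the first and the last iteration of $\bo$, avoiding rendezvous would require $a_2$ to be ahead at one and behind at the next, and the synchronization $\rd$ sitting between them---which pushes $a_2$'s $\bo$ via Lemma~\ref{lem:rd:bo} and is itself pushed by $a_1$'s adjacent $\fl$ via Lemma~\ref{lem:fl:di}, and whose repeated $\di$'s from $u$ meet any concurrent $\fl$ from $u$ by Lemma~\ref{lem:di:rdv}---is meant to make that reversal impossible. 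Turning this into an airtight argument demands a case analysis on the configuration at the critical instant, namely which agent is the earlier one and whether the later one is then inside a $\fl$, a $\bo$, or a $\rd$ (the six cases); reconciling this bookkeeping with the cyclic shift and with the constants hidden in $\rho$ and $r$ is the delicate core of the proof.
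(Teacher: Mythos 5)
Your setup coincides with the paper's proof: the same contradiction frame, the same critical phase ($\alpha$ the smallest power of two at least $\max(\dist,\beta)$), the same use of Lemmas~\ref{lemmaFireworks} and~\ref{lemmaBitSynchronization} to align phases and then steps, and the same key observation that the fourth parameter of $\bo$ cyclically shifts which iteration of its sweep is anchored at the other agent's starting node, with the intent of closing via Lemmas~\ref{lem:di:rdv} and~\ref{lem:fl:rdv}. Up to swapping which agent reads the $0$-bit, this is exactly the paper's strategy, and your accounting of the step indices ($2\alpha(\alpha+1)+1=|U|$, the partner's node lying in $U$ since $\dist\leq\alpha$) is correct.

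However, there is a genuine gap, and you locate it yourself: you never establish that the temporal overlap at the $\fl$-agent's node actually occurs, and that step is precisely the heart of the paper's proof, not a routine verification. Worse, the mechanism you sketch for it is not the one the paper uses, and it is unsupported as stated: you propose to compare the two consecutive steps at which the visit to the $\fl$-agent's node is the first, respectively the last, iteration of $\bo$, and to derive a contradiction from an ``ahead at one step, behind at the next'' reversal. But Lemma~\ref{lemmaBitSynchronization} only gives coarse step-level alignment; it induces no monotone ahead/behind relation \emph{within} a step that would have to flip, since the adversary schedules both agents independently inside each step window. The paper needs no two-step argument at all. It commits to the \emph{single} step $j_1$ equal to the index of the $\fl$-agent's node in $U$: by the cyclic shift, the $\bo$-agent's very first action after the synchronization $\rd$ is then to follow $\pa$ to that node and execute $\di$ followed by $\fl$ from it. Because this visit is pinned to the start of the step, the whole analysis reduces to one binary question --- has the $\fl$-agent, inside its own $\fl$, already begun its excursion along $\pa$ toward the $\bo$-agent's node? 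If not, Lemmas~\ref{lem:di:rdv} and~\ref{lem:fl:rdv}, combined with the fact that the backtrack period of $\fl$ re-executes $\di$ from the starting node, force the $\fl$-agent to have completed its entire first period (including the $\di$ executed from the $\bo$-agent's node), contradicting the case hypothesis; if so, either the agents cross on the common path (as $\pa(u,v)=\overline{\pa(v,u)}$) or the $\fl$-agent's $\di$ from the $\bo$-agent's node overlaps the $\bo$-agent's preceding synchronization $\rd$, which consists of $\di$'s from that very node, and Lemma~\ref{lem:di:rdv} applies. To make your proof airtight you would need to replace the reversal sketch by this first-iteration argument (or something equally concrete); as written, the ``delicate core'' you defer is exactly the theorem's content.
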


		\begin{proof}
			To prove this theorem, it is enough to prove the following claim.

			\begin{claim} \label{claim6Cases}
				Let $d_1$ be the smallest power of two such that $d_1 \geq max(\dist, \difbit)$ with $\difbit$ the index of the first bit which differs in the transformed labels of the agents.
				Algorithm \aud{} ensures rendezvous by the time any agent completes an execution of Procedure $\ite(d_1)$.
			\end{claim}

			First, in view of Remark~\ref{alg:rem}, $\difbit$ exists. Respectively denote by $v_1$ and $v_2$, the initial nodes of {two agents denoted by} $a_1$ and $a_2$. This proof is made by contradiction. Suppose that the agents $a_1$ and $a_2$ {execute Algorithm \aud{} but do not meet by the time any agent completes an execution of Procedure $\ite(d_1)$ where $d_1$ is the smallest power of two such that $d_1 \geq max(\dist, \difbit)$}.
	
			{This in particular means that one of the agents eventually starts executing $\ite(d_1)$}. Since $d_1 \geq \dist$, in view of Lemma~\ref{lemmaFireworks}, we know that as soon as this agent completes the execution of Procedure $\p(d_1)$, both agents have started executing $\ite(d_1)$. Otherwise, agents have met which contradicts our assumption. Without loss of generality, suppose that the bits in the transformed labels of agents $a_1$ and $a_2$ with the index $\difbit$ are respectively 1 and 0.

			In order to prove this claim, we first show that there exists an iteration of the loop at line~\ref{firstLoop} of Algorithm~\ref{iteration} during which the two following properties are satisfied:
			\begin{enumerate}
				\item the value of variable $i$ is equal to $\difbit$
				\item the value of variable $j$ is such that when executing Pattern $\bo$ at line~\ref{callBO}, the first pair of Patterns $\di$ and $\fl$ executed inside this $\bo$ by $a_1$ starts from $v_2$
			\end{enumerate}

			The first property follows from the fact that $d_1 \geq \difbit$.

			We now show that the second property is also satisfied. Let $U$ be a list of all the nodes at
			distance at most $d_1$ from $v_1$ and ordered in the order of the first visit when executing $\di(d_1)$ from node $v_1$. The same list is considered in the algorithm of Pattern $\bo(\dip, d_1, d_1, \bops)$ for any positive integers $\dip$ and $\bops$. First of all, there are $2d_1 (d_1 + 1) + 1$ nodes at distance at most $d_1$ from $v_1$, and thus in $U$.
Since the distance between $v_1$ and $v_2$ is $\dist \leq d_1$, $v_2$ belongs to $U$. Denote by $j_1$ its index (between 0 and $2d_1 (d_1 + 1)$) in $U$. According to Procedure $\ite$, the value of variable $j$ is
incremented at each iteration of the loop at line~\ref{firstLoop} and takes one after another each integer value between 0 and $2d_1 (d_1 + 1)$. Consider the iteration when it is equal to $j_1$. According to Algorithm $\bo$, the first node from which $a_1$ executes $\di$ and $\fl$ is the node which has index $j_1 + 0 \pmod{2d_1 (d_1 + 1) + 1} = j_1$. This node is $v_2$, which proves that there exists an iteration of the loop at line~\ref{firstLoop} during which the second property is verified too. Let us denote it by $I$. It is the iteration after the $(1 + (\difbit - 1) (2d_1 (d_1 + 1) + 1) + j_1)$-th synchronization $\rd$ {inside Phase~$d_1$}.
	
			In view of Lemma~\ref{lemmaBitSynchronization}, we know that when an agent completes its execution of the $i$-th synchronization $\rd$ inside the second and the third part of any execution of Procedure $\ite(d_1)$ (for any positive integer $i$ {less} than or equal to $(d_1 (2d_1 (d_1 + 1) + 1) + 1$), the other agent has at least begun the execution of this synchronization $\rd$. Thus, when an agent is the first one which starts executing $I$, it has just finished executing the $(1 + (\difbit - 1) (2d_1 (d_1 + 1) + 1) + j_1)$-th synchronization $\rd$ and the other agent is executing (or finishing executing) the same $\rd$. Let us prove that rendezvous occurs before any of the agents starts the next synchronization $\rd$.

			Let us consider the patterns both agents execute between the beginning of the $(1 + (\difbit - 1) (2d_1 (d_1 + 1) + 1) + j_1)$-th
			synchronization $\rd$, and the beginning of the next one. Agent $a_1$ executes Pattern
			$\rd(\dip, n)$ with $\dip$ and $n$ two positive integers (call this pattern $p_1$)
			and Pattern $\bo(\dip, d_1, d_1, j_1)$ from node $v_1$ while $a_2$ executes $\rd(\dip, n)$ (let
			us call it $p_2$) and $\fl(\dip, d_1)$ (this is $p_3$) from node $v_2$. During its execution of Pattern $\bo(\dip, d_1, d_1, j_1)$ from node $v_1$, $a_1$ first follows $\pa(v_1, v_2)$, and then executes Pattern $\di(\dip)$ followed by Pattern $\fl(\dip, d_1)$ both from node $v_2$ (call them respectively $p_4$ and $p_5$). Recall that during any execution of Pattern $\fl(\dip, d_1)$ from node $v_2$, there are two periods, the second one consisting in backtracking every edge traversal made during the first one. During the first period, in particular, an agent executes a Pattern $\di(\dip)$ from every node at distance at most $d_1$, among which there are node $v_1$ and node $v_2$. Since backtracking $\di(\dip)$ allows to perform exactly the same edge traversals as $\di(\dip)$, during the second period of Pattern $\fl(\dip, d_1)$, there is also an execution of Pattern $\di(\dip)$ from node $v_1$ and another from $v_2$.

			Let us consider two different cases. In the first one, when $a_1$ starts executing $p_4$ from $v_2$, inside $p_3$, $a_2$ has not yet started following $\pa(v_2, v_1)$ to go executing $\di(\dip)$ from $v_1$. In the second one, when $a_1$ starts executing $p_4$ from $v_2$, $a_2$ has at least started following $\pa(v_2, v_1)$ to go executing $\di(\dip)$ from $v_1$. In the following, we analyse both these cases.

			In the first case, consider what $a_2$ can be executing when
			$a_1$ starts executing $p_4$ from node $v_2$ after following $\pa(v_1, v_2)$. First, it can still
			be executing the synchronization $\rd$ $p_2$ from node $v_2$. Then, in view of Lemma~\ref{lem:di:rdv},
			rendezvous occurs. The only other pattern that $a_2$ can be executing at this moment is $p_3$. However, in this case, we know that $a_2$ will have finished its execution of $p_3$ before $a_1$ starts $p_5$, just after $p_4$. Otherwise, in view of Lemma~\ref{lem:fl:rdv}, rendezvous occurs.
	
			We have just reminded the reader that during any execution of Pattern $\fl(\dip, d_1)$ from $v_2$, agent $a_2$ performs, among the patterns $\di(\dip)$ from every node at distance at most $d_1$ from $v_2$, {Pattern} $\di(\dip)$ from $v_2$. If it executes one of these Patterns $\di(\dip)$ while $a_1$ is executing its $p_4$ from node $v_2$ after following $\pa(v_1, v_2)$, in view of Lemma~\ref{lem:di:rdv}, rendezvous occurs. This implies that before $a_1$ finishes following $\pa(v_1, v_2)$, $a_2$ has completed each execution of Pattern $\di(\dip)$ from $v_2$ inside its execution of $\fl(\dip, d_1)$.
	
			This means that, each execution of Pattern $\di(\dip)$ from node $v_2$ during the second period of $p_3$ has already been completed by $a_2$ when $a_1$ starts executing its own $\di(\dip)$ from $v_2$. Since inside the second period of $p_3$, $a_2$ executes Pattern $\di(\dip)$ from node $v_2$, $a_2$ has already executed the whole first period of $p_3$ when $a_1$ starts executing $p_4$ from $v_2$ including Pattern $\di(\dip)$ performed from node $v_1$, since $v_1$ is at distance at most $d_1$ from $v_2$. This contradicts the definition of this first case: according to this definition, when $a_1$ starts executing $p_4$ from $v_2$, inside $p_3$, $a_2$ has not followed $\pa(v_2, v_1)$ yet, and thus has not executed $\di(\dip)$ from $v_1$.
	
			In the second case, we prove that rendezvous occurs, which is a contradiction. Recall that in this case, when $a_1$ starts executing $p_4$ from $v_2$, $a_2$ has at least started following $\pa(v_2, v_1)$ to go executing $\di(\dip)$ from $v_1$. If $a_2$ has not finished following $\pa(v_2, v_1)$ when $a_1$ starts following $\pa(v_1, v_2)$, then if we denote by $t_1$ (resp. $t_2$) the time when $a_1$ (resp. $a_2$) finishes following $\pa(v_1, v_2)$ (resp. $\pa(v_2, v_1)$), agents meet by time $min(t_1, t_2)$ since $\pa(v_1, v_2) = \overline{\pa(v_2, v_1)}$. If $a_2$ has finished following $\pa(v_2, v_1)$ before $a_1$ starts executing $\pa(v_1, v_2)$, then it has started executing $\di(\dip)$ from $v_1$ before $a_1$ finishes executing $p_1$ (before it executes $\bo(\dip, d_1, d_1, j_1)$), which means in view of Lemma~\ref{lem:di:rdv} that the agents achieve rendezvous.
	
			So, whatever the execution chosen by the adversary, rendezvous occurs in the worst case by the time any agent completes $\ite(d_1)$, which proves the claim, and by extension the theorem.
		\end{proof}

	\subsection{Cost analysis}

		\begin{theorem} \label{theo:cost}
			The cost of Algorithm \aud{} {belongs to $\mathrm{O}((\dist+\sholab)^{33})$.}
		\end{theorem}

		\begin{proof}
			In order to prove this theorem, we first need to show the following two claims.

			\begin{claim} \label{cost:pat}
				The cost of each basic pattern inside $\mathcal{BD}(\ite(d_1))$ (with $d_1$ any power of two) {is in $\mathrm{O}(d_1^{30})$.}
			\end{claim}

{			To prove this claim, we are going to exhibit the most costly basic pattern which could belong
	to $\mathcal{BD}(\ite(d_1))$ and prove that its cost is in $\mathrm{O}(d_1^{30})$.
			
			From their algorithms, we get the following upper bounds on the costs of our various basic patterns: $C(\di(\dip)) \in \mathrm{O}(\dip ^ 2)$, $C(\rd(\dip, n)) \in \mathrm{O}(n \times C(\di(\dip)))$, $C(\fl(\dip, \flp)) \in \mathrm{O}((\dip + \flp) ^ 5)$, and $C(\bo(\dip, \flp, \bop, \bops)) \in \mathrm{O}(\bop ^ 2 \times (z + C(\di(\dip)) + C(\fl(\dip, \flp))))$. Remark that the higher the values of their parameters are, the higher the costs of our patterns are (except for the fourth parameter of $\bo$ which does not impact its cost).
			
			Also notice that pattern $\di$ does not belong to $\mathcal{BD}(\ite(d_1))$. It is called when executing the other basic patterns. {Moreover}, if they are given the same values for their two first parameters, Pattern~$\bo$ is more costly than $\fl$, which makes it a good candidate for being our most costly pattern. But, when called with a second parameter which is the cost of $\bo$, Pattern~$\rd$ is much more costly than the latter which makes it the most costly pattern inside $\mathcal{BD}(\ite(d_1))$. Remark that in our algorithm, the second parameter of Pattern~$\rd$ is the cost of either $\fl$ or $\bo$. In particular, it cannot be the cost of another $\rd$.
			
			In view of Lemma~\ref{cor:syn:ite}, for any power of two $d_1$, inside $\mathcal{BD}(\ite(d_1))$, the value of the first parameter given to our patterns is at most $32d_1^4 - 6d_1$. {In addition}, for each basic pattern $\fl$ or $\bo$ inside $\mathcal{BD}(\ite(d_1))$, the value given to its second parameter is always $d_1$. This gives us upper bounds on the values of the parameters our most costly pattern can be given. Hence, the cost of each pattern we call inside $\mathcal{BD}(\ite(d_1))$ is at most $C(\rd(32d_1^4 - 6d_1, C(\bo(32d_1^4 - 6d_1, d_1, d_1, \bops))))$ with $\bops$ any positive integer. This cost belongs to $\mathrm{O}(C(\rd(d_1^4, d_1^2(d_1 + (d_1^4)^2 + (d_1^4)^5))))$ and thus to $\mathrm{O}((d_1^4)^2d_1^{22})$ \ie to $\mathrm{O}(d_1 ^{30})$.}		

			\begin{claim} \label{cost:ite}
				The cost of Procedure $\ite(d_1)$ (with $d_1$ any power of two) {belongs to $\mathrm{O}(d_1^{33})$.}
			\end{claim}

			In view of Definition~\ref{def:decompo} and Remark~\ref{rem:per}, for any power of two $d_1$, the cost of Procedure $\ite(d_1)$ is the same as the sum of the costs of all the basic patterns inside $\mathcal{BD}(\ite(d_1))$. In view of Claim~\ref{cost:pat}, we know that for any power of two $d_1$, inside $\mathcal{BD}(\ite(d_1))$, {the cost of each basic pattern is in $\mathrm{O}(d_1^{30})$.} Thus, to prove this claim it is enough to show that $\mathcal{BD}(\ite(d_1))$ contains a number of basic patterns which {is in $\mathrm{O}(d_1^3)$.}

			For any power of two $d_1$, Procedure $\ite(d_1)$ is composed of a call to $\p(d_1)$ and the nested loops. These loops consist in $2d_1 (2d_1 (d_1 + 1) + 1)$ calls to basic patterns. Half of them are made to $\rd$ and the others either to $\fl$ or to $\bo$. In its turn, $\p(d_1)$ is composed of two parts: a loop calling Procedure $\pp$ and two basic patterns. For any power of two $d_2$, in view of Algorithm~\ref{patternPP}, and since they are both perfect, the number of basic patterns inside $\mathcal{BD}(\pp(d_2, d_1))$ or $\mathcal{BD}(\ite(d_2))$ is the same. As a consequence, if $d_1 \geq 2$, $\mathcal{BD}(\pp(1, d_1))$, \ldots, $\mathcal{BD}(\pp(\frac {d_1} {2}, d_1))$ is composed of as many basic patterns as there are in $\mathcal{BD}(\ite(1))$, \ldots, $\mathcal{BD}(\ite(\frac {d_1} {2}))$.

			For any power of two $i$, let us denote by $L_1(i)$ (resp. $L_2(i)$) the number of calls to basic patterns inside $\mathcal{BD}(\ite(i))$ (resp. $\mathcal{BD}(\p(i))$). We then have the following equations:
			\begin{gather*}
				L_1(i) = L_2(i) + 2i (2i (i + 1) + 1) \\
				L_2(i) = \sum_{j = 0}^{\log_2(i) - 1} (L_1(2 ^ j)) + 2
			\end{gather*}
			They imply the following:
			\begin{gather*}
				L_2(1) = 2 \quad \text{and} \\
				\text{if} \quad i \geq 2 \quad \text{then} \quad L_2(i) = L_2(\frac {i} {2}) + L_1 (\frac {i} {2}) = 2L_2(\frac {i} {2}) + i( i (\frac {i} {2} + 1) + 1)
				\intertext{which we can also write}
				L_2(i) = 2i + \sum_{j = 1}^{\log_2(i)} (2 ^ {\log_2(i) - j} \cdot 2 ^ j ( 2 ^ j ( 2 ^ {j - 1} + 1) + 1)) \\
				L_2(i) = 2i + i\sum_{j = 1}^{\log_2(i)} (2 ^ j (2 ^ {j - 1} + 1) + 1) \\
				L_2(i) = 2i + i\sum_{j = 1}^{\log_2(i)} (2 ^ {2j - 1} + 2 ^ j + 1) \\
				L_2(i) = 2i + i(\log_2(i) + \frac {2 (i ^ 2 - 1)} {3} + 2(i - 1))
			\end{gather*}
			Hence, {both $L_2(i)$ and $L_1(i)$ belong to $\mathrm{O}(i^3)$. This means that for any power of two $d_1$, $\mathcal{BD}(\ite(d_1))$ is composed of a number of basic patterns which is in $\mathrm{O}(d_1^3)$. Hence, in view of Claim~\ref{cost:pat}, the cost of $\ite(d_1)$ indeed belongs to $\mathrm{O}(d_1^{33})$}, which proves the claim.

			Now, it remains to conclude the proof of the theorem. In view of Claim~\ref{claim6Cases}, rendezvous is achieved by the end of the execution of $\ite(\delta)$ by any of the agents, where $\delta$ is the smallest power of two such that $\delta \geq max(\dist, \difbit)$ and $\difbit$ is the index of the first bit which differs in the transformed labels of the agents. {Moreover, in view of Claim~\ref{cost:ite}, the cost of each call to $\ite(d_1)$ for some power of two $d_1 \leq \delta$ belongs to $\mathrm{O}(d_1^{33})$. Since $\sum_{i=0}^{\log\delta}({2^i}^{33})\leq 2\delta^{33}$, the sum of the costs of these calls to Procedure~$\ite$ and thus the cost of our algorithm until rendezvous is achieved belongs to $\mathrm{O}(\delta^{33})$. {Moreover}, by construction, we have $\difbit\leq2\sholab+2$. This means that the cost of our algorithm belongs to $\mathrm{O}((\dist+\sholab)^{33})$.}


		\end{proof}

\section{Conclusion}
\label{sec:ccl}

From Theorems~\ref{theo:preli},~\ref{lemma6Cases} and~\ref{theo:cost}, we obtain the following result concerning the task of approach in the plane.

\begin{theorem}
The task of approach can be solved at cost polynomial in the unknown initial distance $\Delta$ separating the agents
and in the length of (the binary representation) of the shortest of their labels.
\end{theorem}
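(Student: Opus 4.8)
The plan is to simply combine the three preceding theorems, since essentially all of the work has already been discharged in them. First I would invoke Theorem~\ref{lemma6Cases}, which guarantees that Algorithm~\aud{} is a \emph{deterministic} procedure that, run by two agents with distinct labels starting from arbitrary distinct nodes of a basic grid, always brings them to a meeting at a node or inside an edge. Next I would invoke Theorem~\ref{theo:cost}, which bounds the total number of edge traversals performed by both agents before meeting by a polynomial in the initial Manhattan distance $\dist$ and in the length $\sholab$ of the shortest label. Taken together, these two facts assert precisely that Algorithm~\aud{} is a deterministic rendezvous algorithm for the basic grid whose cost is polynomial in $\dist$ and in $\sholab$, which is exactly the hypothesis of Theorem~\ref{theo:preli}.

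The second and final step is then to apply Theorem~\ref{theo:preli} with the algorithm $A=\aud{}$. That theorem takes any grid-rendezvous algorithm of cost polynomial in $\dist$ and $\sholab$ and produces, via the transformation $A^{*}$ described in Section~\ref{sec:pre}, a deterministic algorithm for approach in the plane whose cost is polynomial in the initial Euclidean distance $\Delta$ and in $\sholab$. The one point I would check explicitly is that the two parameters are carried correctly across the reduction: the label parameter $\sholab$ is untouched, since the grid agents are handed exactly the labels of the plane agents; and the distance parameters are equivalent up to a constant, because in the construction of $A^{*}$ the auxiliary node $v'$ is chosen at distance at most $\sqrt{2}/2<1$ from $w$, so the Manhattan grid distance $\dist$ and the Euclidean distance $\Delta$ agree up to a constant factor, whence a polynomial in $\dist$ is a polynomial in $\Delta$.

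I do not expect any genuine obstacle at this stage. The only subtlety worth flagging is to avoid conflating ``polynomial cost'' with ``polynomial time'': the quantity being bounded throughout is the \emph{length of route travelled} (number of edge traversals), which is exactly what both Theorem~\ref{theo:cost} and Theorem~\ref{theo:preli} measure, and the fully asynchronous adversary that controls the walks along each segment is already accounted for in the correctness and cost statements for the grid. With those in hand, the desired statement for approach in the plane follows immediately.
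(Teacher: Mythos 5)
Your proposal is correct and is essentially identical to the paper's own proof, which likewise obtains the result by directly combining Theorem~\ref{theo:preli} (the reduction from approach in the plane to rendezvous in the basic grid) with Theorem~\ref{lemma6Cases} (correctness of Algorithm~\aud) and Theorem~\ref{theo:cost} (polynomial cost in $\dist$ and $\sholab$). Your extra check that the distance parameters transfer across the reduction (the auxiliary node $v'$ lies within Euclidean distance $\sqrt{2}/2<1$ of $w$, so $\dist$ and $\Delta$ are polynomially related) is a point the paper handles inside the statement and proof of Theorem~\ref{theo:preli} itself, so nothing is missing.
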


Throughout the paper, we made no attempt at optimizing the cost. Actually, as the {attentive reader} will have
noticed, our main concern was only to prove the polynomiality. Hence, a natural open problem is to find out the
optimal cost to solve the task of approach. This would be all the more important as in turn we could compare this
optimal cost with the cost of solving the same task with agents that can position themselves in a global system of
coordinates (the almost optimal cost for this case is given in~\cite{CollinsCGL10}) in order to determine whether the
use of such a system (\eg GPS) is finally relevant to minimize the travelled distance.

	\bibliographystyle{plain}

\begin{thebibliography}{10}

\bibitem{AgmonP06}
Noa Agmon and David Peleg.
\newblock Fault-tolerant gathering algorithms for autonomous mobile robots.
\newblock {\em {SIAM} J. Comput.}, 36(1):56--82, 2006.

\bibitem{Alpern02}
Steve Alpern.
\newblock Rendezvous search: {A} personal perspective.
\newblock {\em Operations Research}, 50(5):772--795, 2002.

\bibitem{Alpern03}
Steve Alpern.
\newblock {\em The theory of search games and rendezvous}.
\newblock International Series in Operations Research and Management Science,
  Kluwer Academic Publishers, 2003.

\bibitem{BampasCGIL10}
Evangelos Bampas, Jurek Czyzowicz, Leszek Gasieniec, David Ilcinkas, and Arnaud
  Labourel.
\newblock Almost optimal asynchronous rendezvous in infinite multidimensional
  grids.
\newblock In {\em Distributed Computing, 24th International Symposium, {DISC}
  2010, Cambridge, MA, USA, September 13-15, 2010. Proceedings}, pages
  297--311, 2010.

\bibitem{BouchardDD16}
S{\'{e}}bastien Bouchard, Yoann Dieudonn{\'{e}}, and Bertrand Ducourthial.
\newblock Byzantine gathering in networks.
\newblock {\em Distributed Computing}, 29(6):435--457, 2016.

\bibitem{ChalopinDLP16}
J{\'{e}}r{\'{e}}mie Chalopin, Yoann Dieudonn{\'{e}}, Arnaud Labourel, and
  Andrzej Pelc.
\newblock Rendezvous in networks in spite of delay faults.
\newblock {\em Distributed Computing}, 29(3):187--205, 2016.

\bibitem{CieliebakFPS12}
Mark Cieliebak, Paola Flocchini, Giuseppe Prencipe, and Nicola Santoro.
\newblock Distributed computing by mobile robots: Gathering.
\newblock {\em {SIAM} J. Comput.}, 41(4):829--879, 2012.

\bibitem{CohenP05}
Reuven Cohen and David Peleg.
\newblock Convergence properties of the gravitational algorithm in asynchronous
  robot systems.
\newblock {\em {SIAM} J. Comput.}, 34(6):1516--1528, 2005.

\bibitem{CohenP08}
Reuven Cohen and David Peleg.
\newblock Convergence of autonomous mobile robots with inaccurate sensors and
  movements.
\newblock {\em {SIAM} J. Comput.}, 38(1):276--302, 2008.

\bibitem{CollinsCGL10}
Andrew Collins, Jurek Czyzowicz, Leszek Gasieniec, and Arnaud Labourel.
\newblock Tell me where {I} am so {I} can meet you sooner.
\newblock In {\em Automata, Languages and Programming, 37th International
  Colloquium, {ICALP} 2010, Bordeaux, France, July 6-10, 2010, Proceedings,
  Part {II}}, pages 502--514, 2010.

\bibitem{CzyzowiczKP12}
Jurek Czyzowicz, Adrian Kosowski, and Andrzej Pelc.
\newblock How to meet when you forget: log-space rendezvous in arbitrary
  graphs.
\newblock {\em Distributed Computing}, 25(2):165--178, 2012.

\bibitem{CzyzowiczPL12}
Jurek Czyzowicz, Andrzej Pelc, and Arnaud Labourel.
\newblock How to meet asynchronously (almost) everywhere.
\newblock {\em {ACM} Transactions on Algorithms}, 8(4):37, 2012.

\bibitem{DAngeloSN14}
Gianlorenzo D'Angelo, Gabriele~Di Stefano, and Alfredo Navarra.
\newblock Gathering on rings under the look-compute-move model.
\newblock {\em Distributed Computing}, 27(4):255--285, 2014.

\bibitem{DKU14}
Shantanu Das, Dariusz Dereniowski, Adrian Kosowski, and Przemyslaw Uznanski.
\newblock Rendezvous of distance-aware mobile agents in unknown graphs.
\newblock In {\em Structural Information and Communication Complexity - 21st
  International Colloquium, {SIROCCO} 2014, Takayama, Japan, July 23-25, 2014.
  Proceedings}, pages 295--310, 2014.

\bibitem{0001LM15}
Shantanu Das, Flaminia~L. Luccio, and Euripides Markou.
\newblock Mobile agents rendezvous in spite of a malicious agent.
\newblock In {\em Algorithms for Sensor Systems - 11th International Symposium
  on Algorithms and Experiments for Wireless Sensor Networks, {ALGOSENSORS}
  2015, Patras, Greece, September 17-18, 2015, Revised Selected Papers}, pages
  211--224, 2015.

\bibitem{DefagoGMP06}
Xavier D{\'{e}}fago, Maria Gradinariu, St{\'{e}}phane Messika, and
  Philippe~Raipin Parv{\'{e}}dy.
\newblock Fault-tolerant and self-stabilizing mobile robots gathering.
\newblock In {\em Distributed Computing, 20th International Symposium, {DISC}
  2006, Stockholm, Sweden, September 18-20, 2006, Proceedings}, pages 46--60,
  2006.

\bibitem{DessmarkFKP06}
Anders Dessmark, Pierre Fraigniaud, Dariusz~R. Kowalski, and Andrzej Pelc.
\newblock Deterministic rendezvous in graphs.
\newblock {\em Algorithmica}, 46(1):69--96, 2006.

\bibitem{DieudonneP15}
Yoann Dieudonn{\'{e}} and Andrzej Pelc.
\newblock Deterministic polynomial approach in the plane.
\newblock {\em Distributed Computing}, 28(2):111--129, 2015.

\bibitem{DieudonneP16}
Yoann Dieudonn{\'{e}} and Andrzej Pelc.
\newblock Anonymous meeting in networks.
\newblock {\em Algorithmica}, 74(2):908--946, 2016.

\bibitem{DieudonnePP14}
Yoann Dieudonn{\'{e}}, Andrzej Pelc, and David Peleg.
\newblock Gathering despite mischief.
\newblock {\em {ACM} Transactions on Algorithms}, 11(1):1, 2014.

\bibitem{DieudonnePV15}
Yoann Dieudonn{\'{e}}, Andrzej Pelc, and Vincent Villain.
\newblock How to meet asynchronously at polynomial cost.
\newblock {\em SIAM J. Comput.}, 44(3):844--867, 2015.

\bibitem{DieudonneP12}
Yoann Dieudonn{\'{e}} and Franck Petit.
\newblock Self-stabilizing gathering with strong multiplicity detection.
\newblock {\em Theor. Comput. Sci.}, 428:47--57, 2012.

\bibitem{FlocchiniPSW05}
Paola Flocchini, Giuseppe Prencipe, Nicola Santoro, and Peter Widmayer.
\newblock Gathering of asynchronous robots with limited visibility.
\newblock {\em Theor. Comput. Sci.}, 337(1-3):147--168, 2005.

\bibitem{FraigniaudP08}
Pierre Fraigniaud and Andrzej Pelc.
\newblock Deterministic rendezvous in trees with little memory.
\newblock In {\em Distributed Computing, 22nd International Symposium, {DISC}
  2008, Arcachon, France, September 22-24, 2008. Proceedings}, pages 242--256,
  2008.

\bibitem{FraigniaudP13}
Pierre Fraigniaud and Andrzej Pelc.
\newblock Delays induce an exponential memory gap for rendezvous in trees.
\newblock {\em {ACM} Transactions on Algorithms}, 9(2):17, 2013.

\bibitem{IzumiSKIDWY12}
Taisuke Izumi, Samia Souissi, Yoshiaki Katayama, Nobuhiro Inuzuka, Xavier
  D{\'{e}}fago, Koichi Wada, and Masafumi Yamashita.
\newblock The gathering problem for two oblivious robots with unreliable
  compasses.
\newblock {\em {SIAM} J. Comput.}, 41(1):26--46, 2012.

\bibitem{KowalskiM08}
Dariusz~R. Kowalski and Adam Malinowski.
\newblock How to meet in anonymous network.
\newblock {\em Theor. Comput. Sci.}, 399(1-2):141--156, 2008.

\bibitem{KranakisKR06}
Evangelos Kranakis, Danny Krizanc, and Sergio Rajsbaum.
\newblock Mobile agent rendezvous: {A} survey.
\newblock In {\em Structural Information and Communication Complexity, 13th
  International Colloquium, {SIROCCO} 2006, Chester, UK, July 2-5, 2006,
  Proceedings}, pages 1--9, 2006.

\bibitem{MarcoGKKPV06}
Gianluca~De Marco, Luisa Gargano, Evangelos Kranakis, Danny Krizanc, Andrzej
  Pelc, and Ugo Vaccaro.
\newblock Asynchronous deterministic rendezvous in graphs.
\newblock {\em Theor. Comput. Sci.}, 355(3):315--326, 2006.

\bibitem{MillerP14b}
Avery Miller and Andrzej Pelc.
\newblock Fast rendezvous with advice.
\newblock In {\em Algorithms for Sensor Systems - 10th International Symposium
  on Algorithms and Experiments for Sensor Systems, Wireless Networks and
  Distributed Robotics, {ALGOSENSORS} 2014, Wroclaw, Poland, September 12,
  2014, Revised Selected Papers}, pages 75--87, 2014.

\bibitem{MillerP16}
Avery Miller and Andrzej Pelc.
\newblock Time versus cost tradeoffs for deterministic rendezvous in networks.
\newblock {\em Distributed Computing}, 29(1):51--64, 2016.

\bibitem{PagliPV15}
Linda Pagli, Giuseppe Prencipe, and Giovanni Viglietta.
\newblock Getting close without touching: near-gathering for autonomous mobile
  robots.
\newblock {\em Distributed Computing}, 28(5):333--349, 2015.

\bibitem{Schelling}
Thomas Schelling.
\newblock {\em The Strategy of Conflict}.
\newblock Oxford University Press, Oxford, 1960.

\bibitem{SuzukiY99}
Ichiro Suzuki and Masafumi Yamashita.
\newblock Distributed anonymous mobile robots: Formation of geometric patterns.
\newblock {\em {SIAM} J. Comput.}, 28(4):1347--1363, 1999.

\bibitem{Ta-ShmaZ14}
Amnon Ta{-}Shma and Uri Zwick.
\newblock Deterministic rendezvous, treasure hunts, and strongly universal
  exploration sequences.
\newblock {\em {ACM} Transactions on Algorithms}, 10(3):12, 2014.

\end{thebibliography}

\end{document}